\documentclass[aps,pre,showkeys,preprint,12pt,onecolumn,notitlepage,amsmath,amssymb,longbibliography,nofootinbib,floatfix]{revtex4-2}
\usepackage{amssymb, amsmath, amsthm}
\usepackage{mathrsfs}
\usepackage{graphicx}
\graphicspath{{figs/}}

\DeclareMathOperator{\sech}{sech}
\newtheorem{theorem}{Theorem}[section]
\newtheorem{lemma}[theorem]{Lemma}
\newtheorem{proposition}[theorem]{Proposition}

\theoremstyle{definition}
\newtheorem{definition}[theorem]{Definition}

\theoremstyle{remark}
\newtheorem{remark}[theorem]{Remark}

\providecommand{\texorpdfstring}[2]{#1}
\makeatletter
\def\section{%
  \@startsection{section}{1}{\z@}%
    {10pt \@plus 4pt \@minus 2pt}%
    {4pt \@plus 2pt \@minus 2pt}%
    {\normalfont\small\bfseries}%
}
\def\subsection{%
  \@startsection{subsection}{2}{\z@}%
    {8pt \@plus 3pt \@minus 2pt}%
    {3pt \@plus 2pt \@minus 2pt}%
    {\normalfont\small\bfseries}%
}
\def\frontmatter@abstract@produce{%
  \par\vspace{1\baselineskip}%
  \begingroup
    \prep@absbox
    \unvbox\absbox
    \post@absbox
  \endgroup
  \par\vspace{0.5\baselineskip}%
}%
\def\frontmatter@finalspace{}%
\makeatother

\begin{document}

\raggedbottom

\title{Stability analysis of the Lane--Emden equation of index $n=5$ describing polytropic gas spheres}

\author{S.~C.~Mancas}
\email{stefani.mancas.civ@mail.mil}
\affiliation{Laboratory for Physical Sciences, 5520 Research Park Drive, Catonsville, MD 21228, USA}
\author{Ronald Adams}
\email{adamsr25@erau.edu}
\affiliation{Department of Mathematics, Embry-Riddle Aeronautical University, Daytona Beach, FL 32114, USA}

\keywords{Lane--Emden equation,
Weierstrass elliptic functions,
polytropic gas spheres,
Lyapunov stability,
spectral analysis,
Sturm--Liouville theory,
self-gravitating systems}

\begin{abstract}
We study the Lane--Emden equation of index $n=5$, the critical polytropic case separating finite-radius from infinite-radius self-gravitating equilibria. The scaling family $\theta_\lambda(\xi)=\lambda^{1/2}\theta(\lambda\xi)$ generates a neutral mode $\phi_0=\tfrac12\theta+\xi\theta'$ about the Schuster profile. Schuster's transformation and a logarithmic variable reduce the equation to the autonomous Hamiltonian system $\ddot\nu=\tfrac14(\nu-\nu^5)$ with a first integral. Besides the Schuster (separatrix) and Srivastava solutions, every orbit crossing $\nu=0$ is given in closed form by a Weierstrass $\wp$-function, and the closed forms agree with direct numerical integration to about $10^{-12}$. Up to scaling and sign, the Schuster solution is the only solution regular at the center, and it separates positive solutions from solutions with infinitely many zeros. In homology-invariant variables the singular solution $\theta=(2\xi)^{-1/2}$ is a center exactly at $n=5$, where its stability changes. For a distance-varying Emden--Fowler system with a shifted equilibrium we prove that at $n=5$ the unique equilibrium is linearly and nonlinearly unstable. Finally, the linearization about the Schuster profile is the Schr\"odinger operator $L=-\Delta_r-5\theta_S^4$ on $L^2((0,\infty),\xi^2d\xi)$. Its Friedrichs extension has essential spectrum $[0,\infty)$, a threshold resonance at $0$ carried by the non-normalizable scaling mode, and exactly one negative eigenvalue, $\lambda_0\approx-1.21$, so the Schuster profile is a saddle point of the associated energy functional.
\end{abstract}
\maketitle
\vspace*{-6pt}

\section{Introduction}
The Lane--Emden equation is the standard model for the equilibrium structure of a spherically symmetric, self-gravitating gas. It follows from hydrostatic equilibrium, Poisson’s equation for gravity and a prescribed equation of state, and in dimensionless form it isolates the nonlinear competition between pressure support and gravitational attraction. Lane derived it in 1870 \cite{Lane} and Emden treated it systematically in 1907 \cite{Emden}; since then it has served stellar structure theory, galactic modeling and the study of nonlinear ordinary differential equations alike.

In its standard form, the Lane--Emden equation of index \(n\) is given by
\begin{equation}
\frac{1}{\xi^{2}} \frac{d}{d\xi} \left( \xi^{2} \frac{d\theta}{d\xi} \right) + \theta^{n} = 0,
\label{LE}
\end{equation}
where \(\xi\) denotes the dimensionless radial coordinate and \(\theta(\xi)\) represents the normalized density or gravitational potential, depending on the physical interpretation. The boundary conditions
\begin{equation}\label{BC}
\theta(0) = 1, \qquad \theta'(0) = 0,
\end{equation}
ensure regularity at the center and correspond to finite central density. For polytropic gas spheres, the physical density is recovered through the relation \(\rho = \rho_{c} \theta^{n}\), where \(\rho_{c}\) denotes the central density.

The physical relevance of the equation is rooted in the polytropic equation of state
\begin{equation}
P = K \rho^{\gamma}, \qquad \gamma = 1 + \frac{1}{n},
\end{equation}
which gives a simple phenomenological description of a wide class of astrophysical systems. Depending on $n$, polytropic models approximate main-sequence stars, white dwarfs, molecular clouds and isothermal gas spheres in various limiting regimes. The case $n=5$, with $\gamma = \tfrac{6}{5}$, is special: it is the critical index separating configurations of finite radius from those extending to infinity, and so it marks the threshold between compact and diffuse equilibria.

Mathematically, the Lane--Emden equation is a difficult nonlinear boundary-value problem. Closed-form solutions are known only for three values of the polytropic index: $n=0$, $n=1$ and $n=5$. The first two give simple algebraic or trigonometric solutions; $n=5$ is richer, with whole families of exact solutions generated by nontrivial transformations. In particular, the classical solution
\begin{equation}
\theta(\xi) = \left(1 + \frac{\xi^{2}}{3}\right)^{-1/2}
\end{equation}
exhibits algebraic decay and infinite spatial extent, while the total mass of the corresponding configuration is finite.

The special status of the $n=5$ case was recognized already in the late nineteenth century. Schuster’s pioneering work \cite{Sch} introduced transformations revealing hidden symmetries of the equation, later developed by Fowler \cite{Fow14,Fow30,Fow31} and incorporated into Chandrasekhar’s comprehensive theory of stellar structure \cite{Chan}. Beyond the Schuster profile, Srivastava~\cite{Srivastava} and Sharma~\cite{Sharma1977} obtained additional exact branches; Goenner and Havas~\cite{GoennerHavas2000} expressed oscillatory solutions through Weierstrass $\wp$-functions; and Mach~\cite{Mach2012} classified \emph{all} real solutions in closed form using Jacobian and Weierstrass elliptic functions (see also \cite{Lawden,Olver}). Integrable reductions of related Emden--Fowler equations, including the $n=5$ Lane--Emden case, have been studied systematically by Mancas and Rosu~\cite{MancasRosu2018}, while Weierstrass-based integration techniques for nonlinear wave equations appear in~\cite{MancasBBM2013}. The solution space at $n=5$ is thus far more intricate than for the lower indices.

Stability of the $n=5$ solutions, and of the Schuster profile on the separatrix of the elliptic-function family in particular, is a separate matter. The question is whether a configuration persists under small perturbations or collapses or disperses, which in astrophysical terms separates realizable gas spheres from equilibria that exist only on paper. Mathematically it comes down to the spectrum of the linearized operator and the global phase-space picture.

Here we analyze the stability of the index-$n=5$ equation, tying together its Weierstrass solution structure, its global phase portrait and its dynamical stability. After reducing the equation to autonomous first-integral form, we express generic bounded orbits through $\wp$-functions (Section~\ref{sec:weierstrass}) and contrast them with the algebraic Schuster and Srivastava limits. With Lyapunov methods and phase-space analysis we decide which configurations are stable or unstable and characterize perturbations about the Schuster profile~\cite{Polytropes,Davis}.

The equation has also been revisited recently through symmetry methods, elliptic integrability and dynamical systems theory. B\"ohmer and Harko~\cite{BohmerHarko2010} investigated generalized polytropic models; Adams, Mancas, and Rosu~\cite{AMR2018} studied the stability of orbital modes for generalized Lane--Emden equations; and Mach~\cite{Mach2012} completed the closed-form classification of $n=5$ solutions. The present paper combines such exact representations with a stability analysis.

The paper is organized as follows. Section~\ref{sec:analytic} reduces the equation to an autonomous Hamiltonian system and collects the Schuster, Srivastava and Weierstrass solutions, together with the homology-invariant phase plane and Mach's classification. Section~\ref{sec:stability} proves linear and nonlinear instability of the equilibrium of the distance-varying system. Section~\ref{sec:variational} gives the variational structure, and Section~\ref{sec:linear_stability} studies the spectrum of the linearized operator about the Schuster profile. Sections~\ref{sec:future} and~\ref{sec:conclusion} discuss open directions and summarize the results.

\subsection{Scaling Invariance at \texorpdfstring{$n=5$}{n=5}}
\begin{proposition}[Scaling law and critical index]\label{prop:scaling}
If $\theta$ satisfies \eqref{LE} for a given $n$, then for every $\lambda>0$ the rescaled function
\begin{equation}\label{eq:scaling_family}
\theta_\lambda(\xi)=\lambda^{\frac{2}{n-1}}\theta(\lambda\xi)
\end{equation}
also satisfies \eqref{LE} with the same index $n$. In particular, when $n=5$,
\begin{equation}\label{eq:scaling_symmetry_intro}
\theta_\lambda(\xi)=\lambda^{1/2}\theta(\lambda\xi).
\end{equation}
For a solution normalized by \eqref{BC}, the scaled function satisfies $\theta_\lambda(0)=1$ if and only if
\begin{equation}\label{eq:scaling_norm}
\lambda^{\frac{2}{n-1}}=1,
\end{equation}
which for $\lambda>0$ forces $\lambda=1$ for every polytropic index $n$. The distinguished role of $n=5$ is not normalization invariance, but the critical similarity exponent $2/(n-1)=1/2$ and the resulting infinitesimal symmetry
\begin{equation}\label{eq:phi0_intro}
\phi_0(\xi)=\left.\frac{d}{d\lambda}\theta_\lambda(\xi)\right|_{\lambda=1}
=\frac{1}{2}\theta(\xi)+\xi\theta'(\xi),
\end{equation}
which is a formal null-mode of the linearized operator about the Schuster profile (Section~\ref{sec:linear_stability}).
\end{proposition}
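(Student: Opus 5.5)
The plan is a direct computation of how the Lane--Emden operator transforms under dilation. Write $\alpha=2/(n-1)$ and $\eta=\lambda\xi$. Then $\theta_\lambda(\xi)=\lambda^{\alpha}\theta(\eta)$, so $\theta_\lambda'(\xi)=\lambda^{\alpha+1}\theta'(\eta)$ and $\theta_\lambda''(\xi)=\lambda^{\alpha+2}\theta''(\eta)$. Put these into the expanded radial Laplacian $\theta''+\tfrac{2}{\xi}\theta'$, and use $\tfrac{2}{\xi}=\tfrac{2\lambda}{\eta}$. The linear part then becomes
\[
\lambda^{\alpha+2}\Bigl(\theta''(\eta)+\tfrac{2}{\eta}\theta'(\eta)\Bigr)=-\lambda^{\alpha+2}\theta(\eta)^n .
\]
The nonlinear term is $\theta_\lambda^n=\lambda^{n\alpha}\theta(\eta)^n$. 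The equation is therefore preserved exactly when $n\alpha=\alpha+2$, that is, $\alpha(n-1)=2$. This is the stated exponent, and $n=5$ gives $\alpha=1/2$. I will note that $n\neq1$ is needed, and that for non-integer $n$ one must keep $\theta>0$ or read $\theta^n$ appropriately. Neither point causes difficulty, since dilation preserves sign.

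For the normalization claim, evaluate at $\xi=0$: $\theta_\lambda(0)=\lambda^{\alpha}\theta(0)=\lambda^{\alpha}$. So $\theta_\lambda(0)=1$ if and only if $\lambda^{\alpha}=1$. Because $\alpha\neq0$ and $\lambda\mapsto\lambda^\alpha$ is strictly monotone on $(0,\infty)$, this forces $\lambda=1$. The condition $\theta_\lambda'(0)=\lambda^{\alpha+1}\theta'(0)=0$ holds automatically, so only the value condition constrains $\lambda$.

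For the infinitesimal generator, differentiate $\lambda^{1/2}\theta(\lambda\xi)$ with respect to $\lambda$ at $\lambda=1$. The product rule gives $\tfrac12\theta(\xi)+\xi\theta'(\xi)$, which is the formula for $\phi_0$.

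To justify calling $\phi_0$ a null mode, differentiate the identity $\Delta_r\theta_\lambda+\theta_\lambda^5=0$ in $\lambda$ at $\lambda=1$. This is legitimate because $\theta$ is $C^2$ and the dependence on $\lambda$ is smooth. The result is $(\Delta_r+5\theta^4)\phi_0=0$, so $L\phi_0=0$ when $\theta$ is the Schuster profile. The mode is only "formal" because it is not normalizable. For Schuster's profile one computes $\phi_0=\tfrac12(1+\xi^2/3)^{-3/2}(1-\xi^2/3)$, which decays like $\xi^{-1}$, so $\phi_0\notin L^2(\xi^2d\xi)$. This last remark, rather than the algebra, is the only step needing care, and it is deferred to Section~\ref{sec:linear_stability} in any case.
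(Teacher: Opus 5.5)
Your proposal is correct and follows the paper's proof essentially line for line. You substitute $\eta=\lambda\xi$ so that both sides of \eqref{LE} pick up the common factor $\lambda^{2n/(n-1)}$ (your identity $n\alpha=\alpha+2$), evaluate at $\xi=0$ for the normalization, and differentiate at $\lambda=1$ to obtain $\phi_0$; your justification of $L\phi_0=0$ by differentiating the equation in $\lambda$ is the same argument the paper gives in Lemma~\ref{lem:neutral_mode}, and your explicit Schuster $\phi_0$ with its $\xi^{-1}$ decay matches it.
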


\begin{proof}
Substituting \eqref{eq:scaling_family} into \eqref{LE} and writing $\eta=\lambda\xi$ gives
\[
\frac{1}{\xi^2}\frac{d}{d\xi}\!\left(\xi^2\frac{d\theta_\lambda}{d\xi}\right)
+\lambda^{\frac{2n}{n-1}}\theta(\lambda\xi)^n
=\lambda^{\frac{2n}{n-1}}\left[
\frac{1}{\eta^2}\frac{d}{d\eta}\!\left(\eta^2\frac{d\theta}{d\eta}\right)+\theta(\eta)^n
\right]_{\eta=\lambda\xi}=0.
\]
At $\xi=0$, $\theta_\lambda(0)=\lambda^{2/(n-1)}\theta(0)=\lambda^{2/(n-1)}$, so \eqref{BC} is preserved only when \eqref{eq:scaling_norm} holds; for $\lambda>0$ this gives $\lambda=1$. Differentiating \eqref{eq:scaling_family} at $\lambda=1$ yields \eqref{eq:phi0_intro}. At $n=5$ the exponent $1/2$ is the critical polytropic similarity index marking the transition between finite- and infinite-radius models; the tangent $\phi_0$ is nontrivial for the Schuster solution and generates the formal scaling null-mode analyzed in Section~\ref{sec:linear_stability} (see Figure~\ref{fig:scaling_family}).
\end{proof}

\begin{figure}[htbp]
\centering
\includegraphics[width=0.72\linewidth]{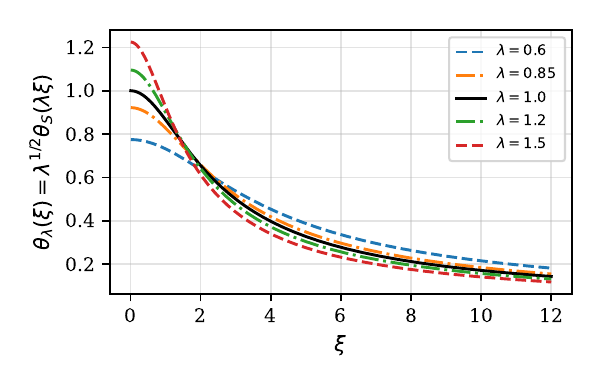}\\[0.6em]
\caption{Members of the scaling family \eqref{eq:scaling_symmetry_intro} at $n=5$, generated from the Schuster profile. Each curve is a solution of \eqref{LE}; only $\lambda=1$ satisfies $\theta(0)=1$, while varying $\lambda$ illustrates the continuous symmetry whose derivative at $\lambda=1$ is $\phi_0$.}
\label{fig:scaling_family}
\end{figure}

\section{Analytic Solutions of the Lane--Emden Equation of Index \texorpdfstring{$n=5$}{n=5}}
\label{sec:analytic}
The analytical study of the Lane--Emden equation traces back to Lane in 1870, who investigated the internal structure of the Sun under the assumption that a gaseous mass could maintain equilibrium through internal heat generation \cite{Lane}. Lane’s formulation combined Newtonian gravity with hydrostatic equilibrium and a polytropic equation of state, laying the foundations for modern stellar structure theory.

Consider a self-gravitating, spherically symmetric gas sphere in hydrostatic equilibrium, governed by the coupled system
\begin{align}
\frac{dP}{dr} &= -\frac{G M(r)\rho(r)}{r^2}, \label{hydrostatic}\\
\frac{dM(r)}{dr} &= 4\pi r^2 \rho(r), \label{mass}\\
P &= K \rho^{\gamma}, \qquad \gamma = 1 + \frac{1}{n}, \label{polytrope}
\end{align}
where $P(r)$, $\rho(r)$, and $M(r)$ denote the pressure, density, and enclosed mass at radius $r$, respectively, and $K$ is a constant determined by the thermodynamic properties of the gas.

Introducing the dimensionless variables
\begin{equation}
\xi = \frac{r}{\alpha}, \qquad
\theta(\xi) = \left(\frac{\rho(r)}{\rho_c}\right)^{1/n},
\end{equation}
where $\rho_c$ is the central density and
\begin{equation}
\alpha = \left[\frac{(n+1)K\rho_c^{(1-n)/n}}{4\pi G}\right]^{1/2},
\end{equation}
the system \eqref{hydrostatic}–\eqref{polytrope} reduces to the Lane--Emden equation \eqref{LE}
subject to the regularity conditions \eqref{BC}.

In 1907, Emden \cite{Emden} systematically studied generalized equations of the form
\begin{equation}\label{eq1}
z_{\chi\chi}=\pm \chi^{1-n} z^n,
\end{equation}
while Fowler \cite{Fow30} later introduced the broader class of equations
\begin{equation}\label{eq2}
\frac{d}{dx}\left(x^2\frac{dy}{dx}\right)+x^{\lambda}y^n=0,
\end{equation}
now known as the Emden--Fowler equations. Equation \eqref{eq1} may be obtained from \eqref{eq2} through the transformations
\begin{equation}
y(x)=z(\chi), \qquad \chi=\frac{1}{x}, \qquad \lambda=n-3.
\end{equation}

The special case $\lambda=2$ corresponds precisely to the Lane--Emden equation of index $n=5$, which may be written in self-adjoint form as
\begin{equation}\label{eq3}
\frac{1}{x^2}\frac{d}{dx}\left(x^2\frac{dy}{dx}\right)+y^5=0.
\end{equation}
This equation is equivalent to the form derived earlier by Schuster \cite{Sch},
\begin{equation}\label{eq4}
z_{\chi\chi}+\chi^{-4}z^5=0.
\end{equation}

Following Schuster’s approach, once a particular solution $z_p(\chi)$ of \eqref{eq4} is known, a more general solution may be constructed by setting
\begin{equation}
z(\chi)=z_p(\chi)\nu(\chi).
\end{equation}
Substituting this ansatz into \eqref{eq4} yields
\begin{equation}\label{eq5}
z_p\nu_{\chi\chi}+2z_p'\nu_\chi+z_p''(\nu-\nu^5)=0.
\end{equation}

According to Kamke \cite{Kamke}, equation \eqref{eq4} admits the particular solution
\begin{equation}
z_p(\chi)=\frac{1}{\sqrt{2}}\sqrt{\chi},
\end{equation}
as may be verified by direct substitution. Introducing this into \eqref{eq5} and simplifying yields the nonlinear Cauchy--Euler equation
\begin{equation}\label{eq6}
\chi^2\nu_{\chi\chi}+\chi\nu_\chi-\frac{1}{4}(\nu-\nu^5)=0.
\end{equation}

The damping term can be eliminated by the logarithmic transformation
\begin{equation}
t=\log \chi,
\end{equation}
which converts \eqref{eq6} into the autonomous second-order equation
\begin{equation}\label{eq7}
\ddot{\nu}-\frac{1}{4}(\nu-\nu^5)=0.
\end{equation}
\begin{lemma}[First integral of the reduced dynamics]\label{lem:first_integral}
Every solution of \eqref{eq7} satisfies
\begin{equation}\label{eq8}
\frac{1}{2}\dot{\nu}^2=\frac{1}{8}\nu^2-\frac{1}{24}\nu^6+\nu_0,
\end{equation}
where $\nu_0\in\mathbb{R}$ is constant along trajectories.
\end{lemma}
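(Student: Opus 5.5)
The plan is the standard energy argument for a conservative one-degree-of-freedom system. Equation \eqref{eq7} has the form $\ddot\nu = -V'(\nu)$ with potential
\begin{equation*}
V(\nu) = -\frac{1}{8}\nu^2 + \frac{1}{24}\nu^6 ,
\end{equation*}
since $-V'(\nu) = \tfrac14\nu - \tfrac14\nu^5 = \tfrac14(\nu-\nu^5)$. The natural candidate for a conserved quantity is therefore the energy
\begin{equation*}
E(t) = \frac12\dot\nu^2 - \frac18\nu^2 + \frac1{24}\nu^6 ,
\end{equation*}
and identity \eqref{eq8} is just the statement $E(t)\equiv\nu_0$.

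The key step is to differentiate $E$ along an arbitrary $C^2$ solution $\nu(t)$ of \eqref{eq7}. This gives
\begin{equation*}
\dot E = \dot\nu\left(\ddot\nu - \frac14\nu + \frac14\nu^5\right) = \dot\nu\left(\ddot\nu - \frac14(\nu-\nu^5)\right) = 0
\end{equation*}
on the whole maximal interval of existence. Equivalently, I would multiply \eqref{eq7} by $\dot\nu$ and recognize the result as the exact derivative $\frac{d}{dt}\bigl[\tfrac12\dot\nu^2 - \tfrac18\nu^2 + \tfrac1{24}\nu^6\bigr]=0$.

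Since that interval is connected, $E$ is constant on it. Setting $\nu_0 := E(t_*)$ for any fixed $t_*$ in the interval, and rearranging, yields \eqref{eq8} with $\nu_0\in\mathbb{R}$. The constant is determined by the initial data, $\nu_0 = \tfrac12\dot\nu(t_*)^2 + V(\nu(t_*))$, so it depends only on the trajectory.

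There is no genuine obstacle here. The only point of care is that the solution must be $C^2$, so that the chain rule applies. This holds automatically because the right-hand side of \eqref{eq7} is a polynomial, so solutions are smooth by standard ODE theory. If desired, one can remark that in the original variables, $\nu = \sqrt2\,\chi^{-1/2}z$ with $t=\log\chi$, the quantity $\nu_0$ becomes a first integral of \eqref{eq4}. That translation is not needed for the lemma as stated.
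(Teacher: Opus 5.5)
Your proof is correct and is essentially the paper's argument. The paper likewise multiplies \eqref{eq7} by $\dot\nu$ and recognizes $\dot\nu\cdot\tfrac14(\nu-\nu^5)$ as $\frac{d}{dt}\bigl(\tfrac{\nu^2}{8}-\tfrac{\nu^6}{24}\bigr)$, so that $\tfrac12\dot\nu^2-\tfrac18\nu^2+\tfrac1{24}\nu^6$ is conserved along trajectories.
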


\begin{proof}
Multiply \eqref{eq7} by $\dot{\nu}$. Using $\ddot{\nu}=\tfrac14(\nu-\nu^5)$ from \eqref{eq7},
\[
\frac{d}{dt}\left(\frac{1}{2}\dot{\nu}^2\right)
=\dot{\nu}\,\ddot{\nu}
=\dot{\nu}\,\frac{1}{4}\left(\nu-\nu^5\right)
=\frac{d}{dt}\left(\frac{\nu^2}{8}-\frac{\nu^6}{24}\right).
\]
Thus $\tfrac12\dot{\nu}^2-\tfrac18\nu^2+\tfrac{1}{24}\nu^6$ is conserved; the equivalent energy integral in Schuster's original variables is established in~\cite{Mach2012}, while Chandrasekhar~\cite[Ch.~IV, p.~96]{Chan} provides the analytic $n=5$ solution $\theta_S$ and the polytrope mass formula.
\end{proof}

The parameter $\nu_0$ labels distinct energy levels of the autonomous system \eqref{eq7} (Figure~\ref{fig:energy_levels}). The level $\nu_0=0$ is the separatrix and carries the Schuster solution; the Srivastava level and the generic levels are treated afterwards.

\begin{figure}[htbp]
\centering
\includegraphics[width=0.72\linewidth]{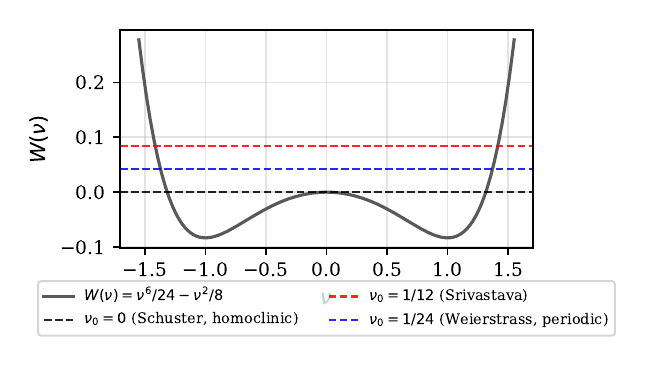}\\[0.6em]
\caption{Effective potential $W(\nu)=\nu^6/24-\nu^2/8$ (gray) and horizontal energy levels $\nu_0$ for the first integral \eqref{eq8}; the central maximum at $\nu=0$ is the hyperbolic saddle of Theorem~\ref{thm:saddle_nu}. Highlighted: the separatrix $\nu_0=0$ (black), corresponding to the Schuster solution, whose trajectory is \emph{homoclinic} -- a single orbit that leaves and returns to the same equilibrium as $t\to\pm\infty$, rather than repeating; the Srivastava level $\nu_0=1/12$ (red), a smooth periodic orbit in elementary form; and a generic level $\nu_0=1/24$ (blue), a periodic orbit expressed through Weierstrass elliptic functions (Section~\ref{sec:weierstrass}). Levels $-1/12<\nu_0<0$ give periodic orbits inside one of the two wells, levels $\nu_0>0$ periodic orbits that cross $\nu=0$. These colors are used consistently in Figures~\ref{fig:integrable_nu}, \ref{fig:weierstrass_family}, and~\ref{fig:phase_portrait}.}
\label{fig:energy_levels}
\end{figure}

\subsection{The Classical Schuster Solution (\texorpdfstring{$\nu_0=0$}{nu0=0})}
\label{sec:schuster}
Setting $\nu_0=0$ in the first integral \eqref{eq8} yields
\begin{equation}\label{eq:schuster_energy}
\frac{1}{2}\dot{\nu}^2
=\frac{1}{8}\nu^2-\frac{1}{24}\nu^6
=\frac{1}{24}\nu^2(3-\nu^4).
\end{equation}
Real trajectories therefore satisfy $|\nu|\le 3^{1/4}$. Equation
\eqref{eq:schuster_energy} represents the separatrix energy level of the
associated autonomous system \eqref{eq7}.

\medskip
\noindent
\textbf{Separation of variables.}
From \eqref{eq:schuster_energy} we obtain
\begin{equation}\label{eq:schuster_sep}
\dot{\nu}
=\pm\frac{1}{\sqrt{12}}\nu\sqrt{3-\nu^4},
\end{equation}
and hence
\begin{equation}\label{INT}
\int
\frac{2\,d\nu}{\nu\sqrt{1-\nu^4/3}}
=
t-t_0.
\end{equation}
Introduce
\begin{equation}\label{eq:sub_u}
u=\frac{\nu^2}{\sqrt{3}}, \qquad u>0.
\end{equation}
Since $\nu^2=\sqrt{3}\,u$ and $d\nu=\frac{\sqrt{3}}{2\nu}\,du$, the integrand simplifies to
\[
\frac{2\,d\nu}{\nu\sqrt{1-\nu^4/3}}
=
\frac{1}{u}\frac{du}{\sqrt{1-u^2}}.
\]
Thus \eqref{INT} becomes
\begin{equation}\label{eq:int_u}
\int \frac{du}{u\sqrt{1-u^2}}=t-t_0.
\end{equation}

Now set $u=\sin\vartheta$, $0<\vartheta<\pi/2$. Then
\[
\frac{du}{u\sqrt{1-u^2}}
=
\frac{\cos\vartheta\,d\vartheta}{\sin\vartheta\,\cos\vartheta}
=
\csc\vartheta\,d\vartheta,
\]
and therefore
\begin{equation}
\int \csc\vartheta\,d\vartheta=t-t_0.
\end{equation}
Using the classical identity
\[
\int \csc\vartheta\,d\vartheta
=
\ln\!\left|\tan\frac{\vartheta}{2}\right|+C,
\]
we obtain
\begin{equation}\label{eq:theta_sol}
\ln\!\left(\tan\frac{\vartheta}{2}\right)=t-t_0.
\end{equation}

Exponentiation gives
\[
\tan\frac{\vartheta}{2}=e^{\,t-t_0},
\]
so that
\begin{equation}\label{eq:theta_explicit}
\vartheta(t)=2\arctan\!\bigl(e^{\,t-t_0}\bigr).
\end{equation}

\medskip
\noindent
\textbf{Explicit form of $\nu(t)$.}
Since $\nu^2=\sqrt{3}\sin\vartheta$, we obtain
\begin{equation}
\nu(t)
=
3^{1/4}
\sqrt{\sin\!\bigl(2\arctan(e^{\,t-t_0})\bigr)}.
\end{equation}
Using the identity
\[
\sin(2\arctan y)=\frac{2y}{1+y^2},
\]
this simplifies to
\begin{equation}\label{eq:nu_simplified}
\nu(t)
=
3^{1/4}
\sqrt{\frac{2e^{\,t-t_0}}{1+e^{\,2(t-t_0)}}}
=
3^{1/4}\sqrt{\sech(t-t_0)}\end{equation}

\medskip
\noindent
\textbf{Recovery of the Lane--Emden profile.}
Recalling that $t=\log\chi$ and $z(\chi)=z_p(\chi)\nu(\chi)$ with
\[
z_p(\chi)=\frac{1}{\sqrt{2}}\chi^{1/2},
\]
substitution of \eqref{eq:nu_simplified} yields
\[
z(\chi)
=
\frac{1}{\sqrt{2}}\chi^{1/2}
\,3^{1/4}\sqrt{\sech(\log\chi-\log\chi_0)}.
\]
Setting $\chi_0=1/\sqrt{3}$ and transforming back to the Lane--Emden variable $\xi$ gives the classical closed-form solution
\begin{equation}
\theta(\xi)=\left(1+\frac{\xi^2}{3}\right)^{-1/2}.
\end{equation}

This is the separatrix configuration of the $n=5$ polytrope. It decays algebraically as $\theta(\xi)\sim \sqrt{3}\,\xi^{-1}$ for large $\xi$, has infinite spatial extent, and sits at the threshold between finite-radius and infinite-radius equilibrium configurations (Figures~\ref{fig:schuster_profile} and~\ref{fig:numeric_comparison}).

\begin{figure}[htbp]
\centering
\includegraphics[width=0.72\linewidth]{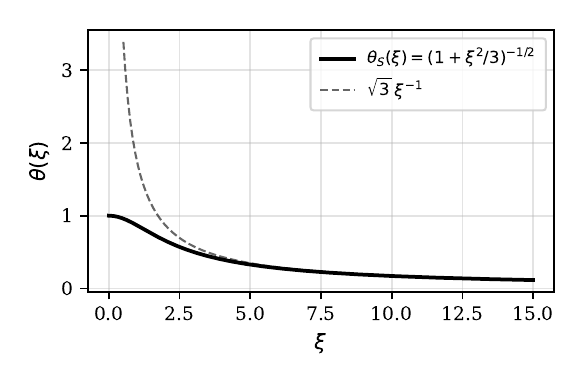}\\[0.6em]
\caption{The Schuster profile $\theta_S(\xi)=(1+\xi^2/3)^{-1/2}$ (solid) and the asymptotic reference $\sqrt{3}\,\xi^{-1}$ (dashed).}
\label{fig:schuster_profile}
\end{figure}

\begin{figure}[htbp]
\centering
\includegraphics[width=0.72\linewidth]{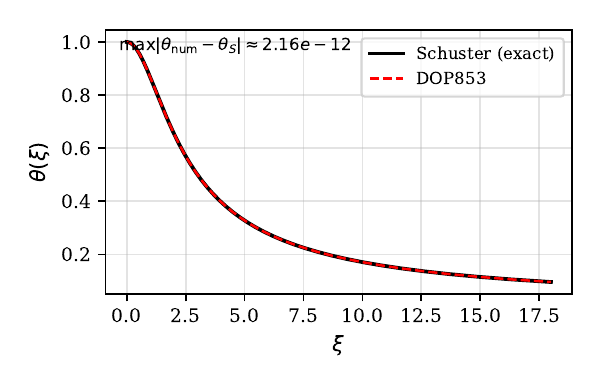}\\[0.6em]
\caption{Direct DOP853 integration of \eqref{eq:lane_emden_n5} (dashed; relative tolerance $10^{-12}$, started from the regular series at $\xi=10^{-4}$) compared with the closed-form Schuster solution (solid); maximum deviation $2.2\times10^{-12}$ on $0\le\xi\le18$.}
\label{fig:numeric_comparison}
\end{figure}

\subsection{The Srivastava Solution (\texorpdfstring{$\nu_0=\tfrac{1}{12}$}{nu0=1/12})}
\label{sec:srivastava}
Srivastava~\cite{Srivastava} identified a second integrable case of \eqref{eq8}, corresponding to the energy level $\nu_0=\tfrac{1}{12}$. Substituting into \eqref{eq8} gives
\begin{equation}\label{eq:sriv_start}
\frac12\dot{\nu}^2=\frac18\nu^2-\frac{1}{24}\nu^6+\frac{1}{12}.
\end{equation}
Multiplying \eqref{eq:sriv_start} by $24$ yields
\begin{equation}\label{eq:sriv_poly24}
12\dot{\nu}^2=3\nu^2-\nu^6+2.
\end{equation}
The sextic polynomial on the right-hand side factorizes conveniently:
\begin{equation}\label{eq:sriv_factor}
3\nu^2-\nu^6+2
=
(2-\nu^2)(1+\nu^2)^2,
\end{equation}
which can be verified by direct expansion:
\[
(2-\nu^2)(1+\nu^2)^2=(2-\nu^2)(1+2\nu^2+\nu^4)=2+3\nu^2-\nu^6.
\]
Hence \eqref{eq:sriv_poly24} becomes
\begin{equation}\label{eq:sriv_sep}
12\dot{\nu}^2=(2-\nu^2)(1+\nu^2)^2,
\qquad\text{so}\qquad
\dot{\nu}=\pm\frac{1}{\sqrt{12}}(1+\nu^2)\sqrt{2-\nu^2}.
\end{equation}
Real trajectories require $2-\nu^2\ge0$, so $|\nu|\le\sqrt{2}$. Separating variables in \eqref{eq:sriv_sep} gives
\begin{equation}\label{eq:sriv_int}
\int \frac{d\nu}{(1+\nu^2)\sqrt{2-\nu^2}}
=\pm\frac{t-t_0}{\sqrt{12}}.
\end{equation}

To evaluate the integral explicitly, set
\begin{equation}\label{eq:sriv_sub}
\nu=\sqrt{2}\,\sin\vartheta,\qquad \vartheta\in\Bigl(-\frac{\pi}{2},\frac{\pi}{2}\Bigr),
\end{equation}
so that $d\nu=\sqrt{2}\cos\vartheta\,d\vartheta$ and $\sqrt{2-\nu^2}=\sqrt{2}\cos\vartheta$. Substituting into \eqref{eq:sriv_int} yields
\begin{equation}\label{eq:sriv_theta_int}
\int \frac{\sqrt{2}\cos\vartheta\,d\vartheta}{\bigl(1+2\sin^2\vartheta\bigr)\sqrt{2}\cos\vartheta}
=
\int\frac{d\vartheta}{1+2\sin^2\vartheta}
=\pm\frac{t-t_0}{\sqrt{12}}.
\end{equation}
Using $\tan\vartheta = u$ so that $\sin^2\vartheta=\frac{u^2}{1+u^2}$ and $d\vartheta=\frac{du}{1+u^2}$, we obtain
\begin{equation}\label{eq:sriv_u_int}
\int\frac{d\vartheta}{1+2\sin^2\vartheta}
=
\int \frac{1}{1+\frac{2u^2}{1+u^2}}\frac{du}{1+u^2}
=
\int \frac{du}{1+3u^2}
=
\frac{1}{\sqrt{3}}\arctan(\sqrt{3}\,u).
\end{equation}
Therefore \eqref{eq:sriv_theta_int} becomes
\begin{equation}\label{eq:sriv_arctan}
\frac{1}{\sqrt{3}}\arctan\!\bigl(\sqrt{3}\tan\vartheta\bigr)
=\pm\frac{t-t_0}{\sqrt{12}}.
\end{equation}
Multiplying by $\sqrt{3}$ and noting $\sqrt{3}/\sqrt{12}=1/2$, we obtain
\begin{equation}\label{eq:sriv_simple}
\arctan\!\bigl(\sqrt{3}\tan\vartheta\bigr)=\pm\frac{t-t_0}{2}.
\end{equation}
Taking the tangent of both sides gives
\begin{equation}\label{eq:sriv_tan}
\sqrt{3}\tan\vartheta=\tan\!\left(\pm\frac{t-t_0}{2}\right),
\qquad\text{hence}\qquad
\tan\vartheta=\frac{1}{\sqrt{3}}\tan\!\left(\pm\frac{t-t_0}{2}\right).
\end{equation}
Finally, $\nu=\sqrt{2}\sin\vartheta$ with $\sin\vartheta=\tan\vartheta/\sqrt{1+\tan^2\vartheta}$ gives, for $|t-t_0|<\pi$,
$\nu=\sqrt{2}\tan\bigl(\tfrac{t-t_0}{2}\bigr)\big/\sqrt{3+\tan^2\bigl(\tfrac{t-t_0}{2}\bigr)}$. This expression has jumps from $\sqrt2$ to $-\sqrt2$ at $t-t_0=(2k+1)\pi$, which are an artifact of the branch of $\arctan$ in \eqref{eq:sriv_tan}: continuing $\vartheta$ continuously through $\pm\pi/2$ (multiply numerator and denominator by $\cos\frac{t-t_0}{2}$) yields the smooth solution
\begin{equation}\label{eq:sriv_final}
\nu(t)=\frac{\sqrt{2}\,\sin\!\left(\frac{t-t_0}{2}\right)}
{\sqrt{1+2\cos^2\!\left(\frac{t-t_0}{2}\right)}},
\end{equation}
valid for all $t\in\mathbb{R}$. It is $4\pi$-periodic, satisfies $|\nu(t)|\le\sqrt{2}$, and attains $\pm\sqrt2$ (the turning points of \eqref{eq:sriv_sep}) at $t-t_0=(2k+1)\pi$. In the Lane--Emden variables, $\theta(\xi)=\nu/\sqrt{2\xi}$ with $t=-\log\xi$, this is Srivastava's oscillating solution $\theta(\xi)=\sin(\ln\sqrt{\xi})\big/\bigl[\sqrt{\xi}\,\bigl(3-2\sin^2(\ln\sqrt{\xi})\bigr)^{1/2}\bigr]$ (up to sign and the choice of $t_0$), with infinitely many zeros. A direct DOP853 integration of \eqref{eq7} from $\nu(t_0)=0$ on the level $\nu_0=\tfrac1{12}$ reproduces \eqref{eq:sriv_final} to $1.2\times10^{-12}$ on $|t-t_0|\le10$ (Figure~\ref{fig:integrable_nu}). The Srivastava orbit thus oscillates indefinitely, in contrast to the Schuster solution, which decays to $0$ as $t\to\pm\infty$ on the separatrix level $\nu_0=0$.%

\begin{figure}[htbp]
\centering
\includegraphics[width=0.72\linewidth]{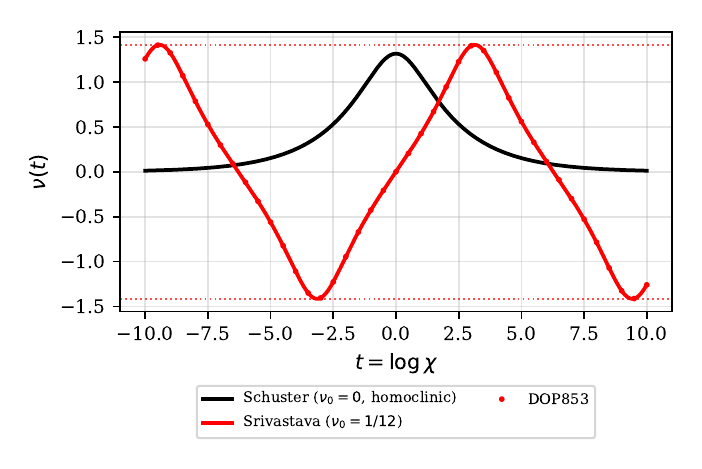}\\[0.6em]
\caption{Explicit integrable solutions $\nu(t)$ for $\nu_0=0$ (Schuster, black, homoclinic; see Figure~\ref{fig:energy_levels}) and $\nu_0=1/12$ (Srivastava, red, elementary) in the logarithmic variable $t=\log\chi$, color-matched to Figures~\ref{fig:energy_levels}, \ref{fig:weierstrass_family}, and~\ref{fig:phase_portrait}. The Srivastava curve \eqref{eq:sriv_final} is smooth and $4\pi$-periodic, with turning points $\pm\sqrt{2}$ (horizontal dotted lines); dots: direct DOP853 integration of \eqref{eq7} (maximum deviation $1.2\times10^{-12}$).}
\label{fig:integrable_nu}
\end{figure}
\subsection{General Properties of the Integrable Cases}
The two integrable solutions have the following features in common or in contrast:
\begin{enumerate}
\item [i)] Boundedness: Both solutions remain bounded for all $t\in\mathbb{R}$.
\item [ii)] Asymptotic behavior: The Schuster solution decays to $0$ as $t\to\pm\infty$ (algebraic decay of $\theta_S$ in $\xi$), while the Srivastava solution oscillates periodically between $\pm\sqrt2$.
\item [iii)] Phase-space structure: Each solution lies on a distinct invariant energy manifold of \eqref{eq7}.
\item [iv)] Stability relevance: These exact solutions provide natural reference states for subsequent linear and nonlinear stability analysis.
\end{enumerate}

\subsection{Weierstrass Elliptic Functions and the First Integral \texorpdfstring{\eqref{eq8}}{(8)}}
\label{sec:weierstrass}
The first integral \eqref{eq8} shows that the reduced dynamics \eqref{eq7} is a one-degree-of-freedom Hamiltonian system with effective potential $W(\nu)=\nu^6/24-\nu^2/8$, whose central maximum at $\nu=0$ is the saddle of the reduced phase portrait. Since $W(\nu)\to+\infty$ as $|\nu|\to\infty$, every trajectory is bounded: for $-\tfrac1{12}<\nu_0<0$ it is a periodic orbit inside one of the wells around $\nu=\pm1$, and for $\nu_0>0$ a periodic orbit that crosses $\nu=0$. At a generic level the trajectory is not elementary: separation of variables leads to an elliptic integral. The Schuster ($\nu_0=0$) and Srivastava ($\nu_0=\tfrac{1}{12}$) cases are distinguished elementary cases; all other bounded real orbits are naturally expressed through Weierstrass elliptic functions~\cite{Mach2012,MancasRosu2018}.

\medskip
\noindent
\textbf{Elliptic normal form.}
Rewriting \eqref{eq8} as
\begin{equation}\label{eq:elliptic_nu}
\dot{\nu}^2=\frac{1}{12}\bigl(-\nu^6+3\nu^2+C\bigr),
\qquad C:=24\nu_0,
\end{equation}
introduces the Mach parameter $C$ for the autonomous variable $\nu$~\cite{Mach2012}. For $C>0$, the even sextic $-\nu^6+3\nu^2+C$ has exactly two real roots $\pm\nu_{\max}$, and bounded oscillatory motion takes place on $\nu\in[-\nu_{\max},\nu_{\max}]$. Following Mach~\cite{Mach2012}, the quadrature
\begin{equation}\label{eq:nu_quadrature}
t-t_0=\sqrt{12}\int \frac{d\nu}{\sqrt{-\nu^6+3\nu^2+C}}
\end{equation}
is reduced to the Weierstrass form by the substitution $\nu=\pm\sqrt{C/(s-1)}$, which yields
\begin{equation}\label{eq:weierstrass_ode}
t-t_0=\mp\sqrt{12} \int \frac{ds}{\sqrt{4s^3-12s-4(C^2-2)}}.
\end{equation}
The right-hand side is an inverse Weierstrass integral: if $\wp(\tau;g_2,g_3)$ denotes the Weierstrass $\wp$-function with invariants $g_2=12$ and $g_3=4(C^2-2)$, then~\cite{Mach2012}
\begin{equation}\label{eq:wp_identity}
\tau=\int_{\wp(\tau)}^{\infty}\frac{ds}{\sqrt{4s^3-g_2 s-g_3}}.
\end{equation}
The oscillatory solutions of \eqref{eq7} with $C>0$ (equivalently $\nu_0>0$) therefore admit the closed representation
\begin{equation}\label{eq:nu_wp}
\nu(t)=\pm\frac{\sqrt{C}}{\sqrt{\wp\!\left(\tfrac{t-t_0}{2\sqrt{3}};\,12,\,4(C^2-2)\right)-1}},
\end{equation}
with $\wp$ evaluated on the real axis, where $\wp\ge e_1>1$ ($e_1$ the largest real root of $4s^3-12s-4(C^2-2)$); $\nu$ vanishes at the poles of $\wp$ and changes sign there. For $C=1$, \eqref{eq:nu_wp} agrees with a direct DOP853 integration of \eqref{eq7} to $1.9\times10^{-12}$ on $|t-t_0|\le10$ (Figure~\ref{fig:weierstrass_family}). Tracing back through $t=\log\chi$ and $z=z_p(\chi)\nu$ with $z_p(\chi)=\chi^{1/2}/\sqrt{2}$ recovers the Lane--Emden profile in the original variables; this is the Weierstrass family found by Goenner and Havas~\cite{GoennerHavas2000} and systematized by Mach~\cite{Mach2012}.

\medskip
\noindent
\textbf{Energy levels and bifurcation.}
The constant $C$ (equivalently $\nu_0$) plays the role of a bifurcation parameter for \eqref{eq7}:
\begin{enumerate}
\item [i)] $C=0$ ($\nu_0=0$): the separatrix \eqref{eq:schuster_energy} and the Schuster profile; the quadrature \eqref{eq:nu_quadrature} degenerates to elementary functions (Subsection~\ref{sec:schuster}).
\item [ii)] $C=2$ ($\nu_0=\tfrac{1}{12}$): the Srivastava level; \eqref{eq:elliptic_nu} factorizes and yields the closed trigonometric form of Subsection~\ref{sec:srivastava}.
\item [iii)] $C>0$, $C\neq2$: oscillatory orbits with infinitely many zeros in $\xi$; \eqref{eq:nu_wp} applies (the discriminant $g_2^3-27g_3^2=1728-432(C^2-2)^2$ is positive for $0<C<2$ and negative for $C>2$).
\item [iv)] $-2<C<0$: periodic orbits inside one well; $\nu$ keeps its sign, so $\theta=\nu/\sqrt{2\xi}$ has no zeros and is singular at $\xi=0$. At $C=-2$ the orbit reduces to the equilibrium $\nu=\pm1$, i.e.\ the singular solution $\theta=\pm(2\xi)^{-1/2}$.
\end{enumerate}
Thus the classical integrable cases are isolated values of $C$, while the generic motion is periodic in the logarithmic radius $t$.

\medskip
\noindent
\textbf{Connection to Weierstrass methods for related nonlinear equations.}
The reduction \eqref{eq:weierstrass_ode}--\eqref{eq:nu_wp} is structurally parallel to integration techniques developed for other nonlinear evolution equations. Mancas, Spradlin, and Khanal~\cite{MancasBBM2013} showed that traveling-wave reductions of the dissipative Benjamin--Bona--Mahony (BBM) equation become Abel equations whose polynomial Appell invariants yield solutions in terms of $\wp$ and its derivatives, using Ince-type transformations to organize the elliptic moduli. In the same spirit, Mancas and Rosu~\cite{MancasRosu2018} classified two integrable Emden--Fowler classes
\[
z_{\chi\chi}=A\chi^{-\lambda-2}z^n,
\]
including the Lane--Emden case $\lambda=(n-1)/2$ with $n=5$, and wrote parametric Weierstrass solutions for astrophysical models. Their $n=5$ example is precisely the Schuster-reduced setting of Section~\ref{sec:weierstrass}, and places the first integral \eqref{eq8} inside a broader EF integrability framework.

From a dynamical-systems viewpoint, the appearance of $\wp$ is expected: the level sets of \eqref{eq8} are compact curves in the $(\nu,\dot\nu)$-plane for every $\nu_0>-\tfrac1{12}$, and periodic orbits of a Hamiltonian system with a sextic potential generically correspond to elliptic functions (Figure~\ref{fig:weierstrass_family}). The stability analysis in Sections~\ref{sec:stability}--\ref{sec:linear_stability} therefore concerns a profile that lies on a separatrix of this elliptic family rather than on a generic periodic orbit~\eqref{eq:nu_wp}. The Schuster and Weierstrass curves in Figure~\ref{fig:weierstrass_family} are qualitatively different orbit types, not two members of the same family: the Schuster curve ($C=0$) is the homoclinic separatrix, decaying to $\nu=0$ as $t\to\pm\infty$ (Section~\ref{sec:linear_stability}), while the Weierstrass curve ($C=1$) is a periodic orbit, oscillating indefinitely between the turning points $\pm\nu_{\max}$ of the sextic $-\nu^6+3\nu^2+C$. Despite its localized, bell-shaped profile, the Schuster solution is not a soliton in the usual sense: it is a homoclinic solution of the reduced Hamiltonian ODE \eqref{eq7}, not a traveling-wave solution of an integrable evolution equation.

\begin{figure}[htbp]
\centering
\includegraphics[width=0.72\linewidth]{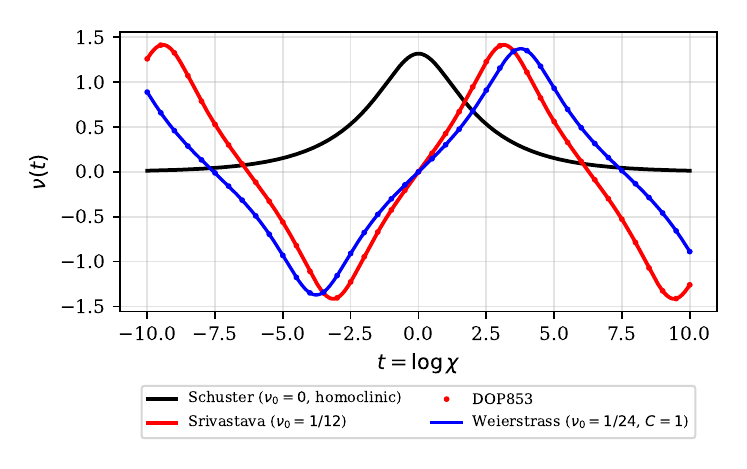}\\[0.6em]
\caption{Reduced solutions $\nu(t)$ for the Schuster separatrix ($\nu_0=0$, black, homoclinic, decaying to $0$ as $t\to\pm\infty$), the Srivastava level ($\nu_0=1/12$, red, smooth and $4\pi$-periodic), and a generic Weierstrass elliptic orbit at $\nu_0=1/24$ ($C=1$, blue, periodic, oscillating between $\pm\nu_{\max}$) from \eqref{eq:nu_wp}. Dots: direct DOP853 integration of \eqref{eq7} (maximum deviations $1.2\times10^{-12}$ and $1.9\times10^{-12}$). Colors match Figures~\ref{fig:energy_levels}, \ref{fig:integrable_nu}, and~\ref{fig:phase_portrait}.}
\label{fig:weierstrass_family}
\end{figure}

\subsection{The Homology-Invariant Phase Plane}
\label{sec:milne}
A second autonomous reduction treats the whole Emden--Fowler family \eqref{eq2} at once. For
\begin{equation}\label{eq:EF_alpha}
\frac{d}{d\xi}\Big(\xi^2\frac{d\eta}{d\xi}\Big)=\alpha\,\xi^{\lambda}\eta^{n},
\end{equation}
which is \eqref{eq2} for $\alpha=-1$ and the Lane--Emden equation \eqref{LE} for $\alpha=-1$, $\lambda=2$, the homology-invariant variables
\begin{equation}\label{eq:milne_vars}
X=\frac{\xi\eta'}{\eta},\qquad Y=\frac{\xi^{\lambda-1}\eta^{n}}{\eta'},\qquad t=\ln\xi,
\end{equation}
($X=-v$, $Y=-u$ in the notation of Chandrasekhar's homology invariants~\cite[Ch.~IV]{Chan}) turn \eqref{eq:EF_alpha} into the planar autonomous system
\begin{equation}\label{eq:milne_system}
\dot X=-X\,(1+X-\alpha Y),\qquad \dot Y=Y\,(1+\lambda+nX-\alpha Y).
\end{equation}
Its equilibria are $(0,0)$, $(-1,0)$, $(0,(1+\lambda)/\alpha)$ and
\begin{equation}\label{eq:milne_X3}
(X_3,Y_3)=\Big(-\frac{\lambda}{n-1},\ \frac{n-1-\lambda}{\alpha(n-1)}\Big).
\end{equation}
The trace $\delta_1$ and determinant $\delta_2$ of the Jacobian, which fix the type of each equilibrium, are listed in Table~\ref{tab:milne}; they depend on $\lambda$ and $n$ only.

\begin{table}[htbp]
\centering
\caption{Equilibria of the homology-invariant system \eqref{eq:milne_system}: trace $\delta_1$ and determinant $\delta_2$ of the Jacobian.}
\label{tab:milne}
\begin{tabular}{ccc}
\hline
Equilibrium & $\delta_1$ & $\delta_2$\\
\hline
$(0,0)$ & $\lambda$ & $-(1+\lambda)$\\
$(-1,0)$ & $2-n+\lambda$ & $1-n+\lambda$\\
$(0,(1+\lambda)/\alpha)$ & $-1$ & $-\lambda(1+\lambda)$\\
$(X_3,Y_3)$ & $\dfrac{1-n+2\lambda}{n-1}$ & $\dfrac{\lambda(n-1-\lambda)}{n-1}$\\
\hline
\end{tabular}
\end{table}

Only $(X_3,Y_3)$ can be a center. Its trace vanishes exactly on the line $n=2\lambda+1$, i.e.\ $\lambda=(n-1)/2$, which is Rosenau's first integrable class~\cite{Rosenau1984} and the first class of~\cite{MancasRosu2018}; there $\delta_2=\lambda/2>0$, so the linearization has purely imaginary eigenvalues. For the Lane--Emden equation ($\lambda=2$) this selects $n=5$: then $(X_3,Y_3)=(-\tfrac12,-\tfrac12)$, which is the singular solution $\theta=(2\xi)^{-1/2}$, and the eigenvalues are $\pm i$. Along $\lambda=2$ the determinant $\delta_2=2(n-3)/(n-1)$ is positive for $n>3$ while the trace $(5-n)/(n-1)$ changes sign at $n=5$: the singular solution is unstable for $3<n<5$ (a node for $n<(11+8\sqrt2)/7\approx3.19$, a focus above) and a stable focus for $n>5$, and $n=5$ is a Hopf-type transition point. Because \eqref{eq7} has the first integral \eqref{eq8}, the transition is degenerate: at $n=5$ the equilibrium is surrounded by a continuous family of closed orbits rather than by a single limit cycle. These closed orbits are the periodic orbits of \eqref{eq7} with $-\tfrac1{12}<\nu_0<0$ around $\nu=\pm1$ (Theorem~\ref{thm:saddle_nu} and the paragraph after it), seen in different variables. A direct DOP853 integration of \eqref{eq:milne_system} with $\alpha=-1$, $n=5$ confirms closed orbits for $\lambda=2$ (the maximal distance from $(X_3,Y_3)$ is the same to $10^{-6}$ on $0\le t\le20$ and $40\le t\le60$), inward spirals for $\lambda=1.8$ and outward spirals for $\lambda=2.2$, as the sign of $\delta_1$ predicts.

\subsection{Complete Classification of Real Solutions for \texorpdfstring{$n=5$}{n=5}: The Work of Mach (2012)}
Mach~\cite{Mach2012} classified all real solutions of the $n=5$ Lane--Emden equation in closed form. This goes beyond the classical work of Schuster, Fowler and Chandrasekhar and puts the known integrable solutions into a single exhaustive framework.

After reduction to the Schuster form \eqref{eq4}, the general real solution can be expressed in terms of Jacobi or Weierstrass elliptic functions. The classification follows from the first integral
\begin{equation}\label{MachFirstIntegral}
\chi z_\chi^2-z\,z_\chi+\frac{1}{3}\chi^{-3}z^6=\nu_0=\frac{C}{24},
\end{equation}
which is \eqref{eq8} written in Schuster's variables through $t=\log\chi$ and $z=z_p(\chi)\nu$ with $z_p(\chi)=\chi^{1/2}/\sqrt{2}$; differentiating and using \eqref{eq4} confirms that it is conserved. In the autonomous variable $\nu$ of \eqref{eq7}, Mach's constant in \eqref{eq:elliptic_nu} satisfies $C=24\nu_0$.

A key result of~\cite{Mach2012} is that the qualitative behavior depends sensitively on the sign and magnitude of $C$ and on the separatrix $C=0$. In particular:
\begin{enumerate}
\item [i)] For $C=0$, the solution reduces to the classical Schuster solution, corresponding to the infinite-radius, finite-mass polytrope with algebraic decay.
\item [ii)] For $C>0$, solutions oscillate in logarithmic radius and admit representations in terms of Jacobi or Weierstrass elliptic functions, with infinitely many zeros in $\xi$; at $C=2$ ($\nu_0=\tfrac{1}{12}$) one recovers the Srivastava branch (Subsection~\ref{sec:srivastava}).
\item [iii)] For $-2\le C<0$, $\nu$ is periodic and of one sign, so $\theta$ is positive for all $\xi>0$, with $\sqrt{\xi}\,\theta$ periodic in $\log\xi$ and $\theta\to\infty$ like $\xi^{-1/2}$ at the center; no real solutions exist for $C<-2$.
\end{enumerate}
Every solution other than the Schuster family is singular at $\xi=0$, since regularity at the center requires $\nu\to0$ as $t\to+\infty$, i.e.\ the separatrix. The Schuster solution thus forms the boundary between positive solutions without zeros ($C<0$) and solutions with infinitely many zeros ($C>0$), reinforcing its role as a critical threshold configuration.

Mach also identified a \emph{discrete scaling symmetry} in the oscillatory solution families. The continuous scaling invariance of the equation at $n=5$ generates the neutral mode of Section~\ref{sec:linear_stability}; the elliptic-function solutions, by contrast, are self-similar only under discrete rescalings of the radial variable. The structure is reminiscent of renormalization-group behavior and is another sign of how special $n=5$ is.

For our purposes Mach’s classification supplies the global picture. The Schuster and Srivastava solutions are non-generic members of a much larger family, and the threshold scaling mode of Section~\ref{sec:linear_stability} is the trace of the continuous scaling symmetry. The Lyapunov instability results are consistent with the existence of nearby elliptic-function solutions that depart qualitatively from the Schuster profile at large radius.

\section{Stability Analysis}
\label{sec:stability}
The distance-varying (nonautonomous) Emden--Fowler model below, its shifted equilibria and its stability theory for general $n$ were introduced by Adams, Mancas and Rosu~\cite{AMR2018}, in the framework of~\cite{HK,SL}. We recall only what is needed and add one result: for $n=5$ we prove instability of the unique equilibrium directly, first for the linearization (Lemma~\ref{lem:pert_growth}, Theorem~\ref{thm:lin_n5_instab}) and then for the full system, with the instability function of~\cite{AMR2018} specialized to $n=5$ and an explicit starting distance $\zeta_0$ (Theorem~\ref{unstable2_n5}).

\begin{remark}
The linearization in Section~\ref{sec:linear_stability} is performed about the explicit profile $\theta_S(\xi)$ in the Lane--Emden boundary-value setting, whereas the linearization in the present section concerns the shifted equilibrium of the distance-varying first-order system in $\zeta$. These are distinct linearized problems and need not share the same spectral character.
\end{remark}
\subsection{Nonautonomous First-Order Formulation and Equilibria}
We study a generalized Emden--Fowler model in the distance variable $\zeta>0$ (shifted so that a finite equilibrium $\bar z=\Omega^{-1/n}$ exists for every $n$ and $\Omega>0$):
\begin{equation}\label{eq:nonauto_general}
\frac{dy}{d\zeta}=f(\zeta,y),\qquad
y(\zeta)=\begin{pmatrix} z(\zeta)\\ z_\zeta(\zeta)\end{pmatrix},
\end{equation}
with
\begin{equation}\label{eq:f_general}
f(\zeta,y)=
\begin{pmatrix}
z_\zeta\\[1mm]
\frac{1}{n+1}\bigl(\Omega z^n-1\bigr)-\frac{2}{\zeta}z_\zeta
\end{pmatrix},
\qquad \zeta\ge \zeta_0>0,
\end{equation}
where $n$ is the polytropic index and $\Omega>0$ is a scaling parameter determined by normalization and boundary conditions. Throughout this section we specialize to the case
\begin{equation}\label{eq:n5}
n=5.
\end{equation}

Equilibria of \eqref{eq:nonauto_general} are distance-invariant points $\bar y$ satisfying $f(\zeta,\bar y)=0$ for all $\zeta$:
\begin{equation}\label{eq:equil_cond}
\bar z_\zeta=0,\qquad \Omega \bar z^n-1=0.
\end{equation}
Hence for odd $n$ (in particular $n=5$) there is a single real equilibrium,
\begin{equation}\label{eq:equil_odd}
\bar y_0^T=\left(\Omega^{-1/n},\,0\right)=\left(\Omega^{-1/5},\,0\right).
\end{equation}
For even $n$ there are two real equilibria $\bar y_{1,2}^T=\left(\mp \Omega^{-1/n},\,0\right)$.

We shift the equilibrium to the origin by setting
\begin{equation}\label{eq:shift}
x^T=y^T-\bar y^T=(x_1,x_2),
\qquad
x_1=z-\bar z,\ \ x_2=z_\zeta.
\end{equation}
For $n=5$ (odd), using $\bar y=\bar y_0$, the shifted system becomes
\begin{equation}\label{system1_n5_plus}
x_\zeta=
\begin{pmatrix}
x_2\\[1mm]
\frac{1}{6}\Bigl[\Omega\left(x_1+\Omega^{-1/5}\right)^5-1\Bigr]-\frac{2}{\zeta}x_2
\end{pmatrix}.
\end{equation}
\paragraph{Results of Adams, Mancas and Rosu~\cite{AMR2018}.} For even $n$ the equilibrium $\bar y_1=(-\Omega^{-1/n},0)$ is uniformly Lyapunov stable and asymptotically stable: the function
\[
V(x)=-\frac{2\Omega}{(n+1)^2}\Bigl[\bigl(x_1-\Omega^{-1/n}\bigr)^{n+1}+\Omega^{-(n+1)/n}\Bigr]+\frac{2}{n+1}x_1+x_2^2
\]
satisfies $V_\zeta=-4x_2^2/\zeta$ along trajectories, it is positive definite on the ball of radius $2\Omega^{-1/n}$ by Bernoulli's inequality, and Barbalat's lemma gives convergence to the equilibrium, with an explicit sublevel-set estimate of the basin. The linearization at $\bar y_1$ also satisfies the linear matrix inequality criterion of Chen and Yang~\cite{CY} with $P(\zeta)=\mathrm{diag}\bigl(1/\zeta,(1+1/n)/(\Omega^{1/n}\zeta)\bigr)$ and $g(\zeta)=-1/\zeta$. For odd $n$ the unique equilibrium is unstable. The remainder of this section treats $n=5$.

\subsection{Linear Instability for \texorpdfstring{$n=5$}{n=5}}
\begin{definition}[Lyapunov and uniform stability]\label{def:lyap}
For the nonautonomous system \eqref{eq:nonauto_general}, the origin is \emph{stable} if for every $\varepsilon>0$ there exists $\delta>0$ such that $\|x(\zeta_0)\|<\delta$ implies $\|x(\zeta)\|<\varepsilon$ for all $\zeta\ge\zeta_0$. It is \emph{uniformly stable} if $\delta$ can be chosen independently of $\zeta_0$, and \emph{unstable} if it is not stable in this sense \cite{HK,SL}.
\end{definition}

We begin by studying the linearization of \eqref{system1_n5_plus} about the origin. Let
\begin{equation}\label{eq:A_def}
A(\zeta)=\nabla_x f(\zeta,x+\bar y)\big|_{x=0}.
\end{equation}
For $n=5$ and $\bar y_0=(\Omega^{-1/5},0)$, differentiation of \eqref{system1_n5_plus} gives
\begin{equation}\label{eq:A_n5}
A(\zeta)=
\begin{pmatrix}
0 & 1\\[1mm]
\frac{5}{6}\Omega^{1/5} & -\frac{2}{\zeta}
\end{pmatrix}.
\end{equation}
Thus the corresponding linear distance-varying system is
\begin{equation}\label{tvls_n5}
x_\zeta=A(\zeta)x,\qquad x(\zeta_0)=x_0.
\end{equation}

\begin{lemma}[Exponential growth of the limiting linearization]\label{lem:lin_growth}
Let
\[
A_0=\begin{pmatrix}0&1\\ \frac{5}{6}\Omega^{1/5}&0\end{pmatrix}.
\]
Then $A_0$ has eigenvalues $\pm\mu$ with $\mu=\sqrt{\frac{5}{6}\Omega^{1/5}}>0$, and there exist $v_+\in\mathbb{R}^2$ and $c>0$ such that every solution of $\frac{d}{d\zeta}x=A_0 x$ with $x(0)=v_+$ satisfies $\|x(\zeta)\|\ge c e^{\mu \zeta}\|v_+\|$ for $\zeta\ge0$.
\end{lemma}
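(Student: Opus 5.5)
The proof is a direct computation with the constant-coefficient matrix $A_0$; no earlier result beyond the definition of $A_0$ is needed.

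First, the characteristic polynomial is $\det(A_0-sI)=s^2-\tfrac56\Omega^{1/5}$. Since $\Omega>0$, the roots are real and distinct, $s=\pm\mu$ with $\mu=\sqrt{\tfrac56\Omega^{1/5}}>0$. For the eigenvalue $\mu$, the relation $A_0v=\mu v$ with $v=(v_1,v_2)^T$ reduces to $v_2=\mu v_1$, so I take $v_+=(1,\mu)^T$; the same computation gives $v_-=(1,-\mu)^T$ for $-\mu$.

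Next, I solve the initial value problem. The system $x'=A_0x$ is linear with constant coefficients, so it has a unique solution with $x(0)=v_+$. The function $x(\zeta)=e^{\mu\zeta}v_+$ satisfies $x'=\mu e^{\mu\zeta}v_+=A_0 e^{\mu\zeta}v_+$ and $x(0)=v_+$, so by uniqueness it is that solution. The words ``every solution'' in the statement therefore refer to this single trajectory. Equivalently, $e^{A_0\zeta}v_+=e^{\mu\zeta}v_+$ follows from the power series of the exponential applied to an eigenvector. Taking norms gives $\|x(\zeta)\|=e^{\mu\zeta}\|v_+\|$, so the estimate holds with $c=1$ for every $\zeta\ge0$, and in fact for all real $\zeta$.

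There is no real obstacle here. The only step needing care is invoking uniqueness so that the claim covers every solution with this initial datum. As a useful addendum for the later instability theorems, I would also record the general formula $x(\zeta)=a e^{\mu\zeta}v_++b e^{-\mu\zeta}v_-$, where $v_\pm$ form a basis because $\mu\neq-\mu$. This shows that any initial datum with $a\neq0$ grows like $|a|e^{\mu\zeta}$ up to a bounded correction. That robustness should carry over to the perturbed system $A(\zeta)=A_0-\tfrac{2}{\zeta}E_{22}$, where $E_{22}$ is the matrix unit with a single $1$ in the $(2,2)$ entry. There the damping term $-\tfrac{2}{\zeta}x_2$ tends to $0$ as $\zeta\to\infty$.
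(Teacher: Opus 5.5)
Your proof is correct and follows the paper's argument: the characteristic polynomial gives the eigenvalues $\pm\mu$, the eigenvector $v_+=(1,\mu)^T$ yields $e^{A_0\zeta}v_+=e^{\mu\zeta}v_+$, and the bound follows. Your appeal to uniqueness and the explicit constant $c=1$ make precise what the paper leaves implicit, and the remark about the perturbed system is an extra that the lemma does not need.
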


\begin{proof}
The characteristic polynomial is $\lambda^2-\frac{5}{6}\Omega^{1/5}=0$, so $\lambda=\pm\mu$. An eigenvector for $\lambda=\mu$ is $v_+=(1,\mu)^T$, and $e^{A_0 t}v_+=e^{\mu t}v_+$.
\end{proof}

\begin{lemma}[Growth under asymptotically hyperbolic linearization]\label{lem:pert_growth}
Let $A(\zeta)=A_0+B(\zeta)$ with $A_0$ as in Lemma~\ref{lem:lin_growth} and $B(\zeta)=\mathrm{diag}(0,-2/\zeta)$, and let $v_+=(1,\mu)^T$. Then for every $\zeta_0>0$ and every $\delta>0$, the solution of $x_\zeta=A(\zeta)x$ with $x(\zeta_0)=\delta v_+$ is unbounded on $[\zeta_0,\infty)$.
\end{lemma}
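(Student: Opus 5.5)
Writing $x=(x_1,x_2)$, the system $x_\zeta=A(\zeta)x$ is the scalar equation
\[
x_1''+\frac{2}{\zeta}x_1'-\mu^2 x_1=0,\qquad \mu^2=\tfrac56\Omega^{1/5},
\]
a modified spherical Bessel equation. The plan is to avoid asymptotic Levinson-type machinery and work with this scalar equation directly. The substitution $w=\zeta x_1$ removes the damping: $w''=\mu^2 w$ exactly. Hence the general solution is
\[
x_1(\zeta)=\frac{a e^{\mu\zeta}+b e^{-\mu\zeta}}{\zeta}.
\]
The proof then reduces to showing that the coefficient $a$ is nonzero for the given initial data. Once $a\neq0$, $|x_1(\zeta)|\sim |a|e^{\mu\zeta}/\zeta\to\infty$, so $\|x(\zeta)\|\ge|x_1(\zeta)|$ is unbounded on $[\zeta_0,\infty)$.

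To compute $a$, I will impose $w(\zeta_0)=\zeta_0\delta$ and $w'(\zeta_0)=x_1(\zeta_0)+\zeta_0x_1'(\zeta_0)=\delta+\zeta_0\delta\mu$. From $w=ae^{\mu\zeta}+be^{-\mu\zeta}$ we get $w'+\mu w=2\mu a e^{\mu\zeta}$. Therefore
\[
2\mu a e^{\mu\zeta_0}=\delta(1+\zeta_0\mu)+\mu\zeta_0\delta=\delta(1+2\mu\zeta_0)>0.
\]
So $a>0$ for every $\zeta_0>0$ and $\delta>0$. In fact this gives the quantitative bound $\|x(\zeta)\|\ge x_1(\zeta)\ge \frac{\delta(1+2\mu\zeta_0)}{2\mu}\,\frac{e^{\mu(\zeta-\zeta_0)}}{\zeta}-\frac{|b|e^{-\mu\zeta}}{\zeta}$, which grows like $e^{\mu\zeta}/\zeta$.

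As an alternative that is closer in spirit to Lemma~\ref{lem:lin_growth}, I could argue monotonically without the explicit formula. With $w=\zeta x_1$, the quantity $p=w'-\mu w$ satisfies $p'=-\mu p$, and $q=w'+\mu w$ satisfies $q'=\mu q$. So $q(\zeta)=q(\zeta_0)e^{\mu(\zeta-\zeta_0)}$ with $q(\zeta_0)>0$, exactly as computed above. Since $2\mu w=q-p$ and $p$ decays, $w$ grows exponentially.

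The only real obstacle is the sign check $a\neq0$, which is where the choice $v_+=(1,\mu)^T$ and $\delta>0$ enter. This check is immediate once the transformation $w=\zeta x_1$ is in hand. So the essential idea, and the step to state carefully, is that the $2/\zeta$ damping is removed exactly by $w=\zeta x_1$, with no perturbative estimate on $B(\zeta)$ needed.
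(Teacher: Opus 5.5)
Your proof is correct. Your computation via $w=\zeta x_1$, giving $a=\delta(1+2\mu\zeta_0)e^{-\mu\zeta_0}/(2\mu)>0$, is exactly the closing remark of the paper's proof. The paper, however, uses that formula only to upgrade the conclusion to exponential growth.

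The paper's main argument for unboundedness is different. It sets $\psi=\zeta^2u'$, so that $\psi'=\mu^2\zeta^2u$. It then runs a positivity bootstrap: $u>0$ and $\psi>0$ at $\zeta_0$ keep each other positive, so $u$ is increasing and $u\ge\delta$. Integrating twice gives the polynomial lower bound $u(\zeta)\ge\frac{\mu^2\delta}{6}(\zeta^2-3\zeta_0^2)$.

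That comparison argument never uses exact solvability. It would go through unchanged if the constant stiffness $\mu^2$ were replaced by any variable coefficient bounded below by a positive constant. The price is that it detects only quadratic growth rather than the true $e^{\mu\zeta}/\zeta$ rate.

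Your route exploits the exact identity $u''+\frac{2}{\zeta}u'=\zeta^{-1}(\zeta u)''$. This gives the sharp rate and the explicit constant in one line. But it is tied to the constant coefficient $\mu^2$ together with the specific damping $-2/\zeta$, and it would not survive a nonconstant stiffness.

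Your $p,q$ variant is not an independent monotonicity argument in the paper's sense. It is a first-order rewriting of the same exact equation $w''=\mu^2w$, so it inherits the same dependence on exact solvability.
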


\begin{proof}
With $B(\zeta)=\mathrm{diag}(0,-2/\zeta)$, $x_\zeta=A(\zeta)x$ reduces to the scalar second-order equation for $u(\zeta)=x_1(\zeta)$:
\[
u''+\frac{2}{\zeta}u'-\mu^2 u=0,\qquad \mu=\Bigl(\tfrac{5\Omega^{1/5}}{6}\Bigr)^{1/2}>0.
\]
Multiplying by $\zeta^2$ and rearranging gives $(\zeta^2 u')'=\mu^2\zeta^2 u$. Set $\psi(\zeta)=\zeta^2 u'(\zeta)$; then
\[
\psi'(\zeta)=\mu^2\zeta^2 u(\zeta).
\]
For $x(\zeta_0)=\delta v_+=({\delta},\mu\delta)^T$, we have $u(\zeta_0)=\delta>0$ and $u'(\zeta_0)=\mu\delta>0$, so $\psi(\zeta_0)=\zeta_0^2\mu\delta>0$. Since $u'=\psi/\zeta^2>0$ as long as $\psi>0$ and $u>0$, the function $u$ is strictly increasing on $[\zeta_0,\infty)$, giving $u(\zeta)\ge\delta$ for all $\zeta\ge\zeta_0$. Consequently $\psi'(\zeta)=\mu^2\zeta^2 u(\zeta)\ge\mu^2\delta\zeta^2$, so
\[
\psi(\zeta)\ge\zeta_0^2\mu\delta+\tfrac{\mu^2\delta}{3}(\zeta^3-\zeta_0^3).
\]
Dividing by $\zeta^2$, dropping the positive term $\zeta_0^2\mu\delta/\zeta^2$ and integrating gives $u(\zeta)\ge\delta+\tfrac{\mu^2\delta}{6}(\zeta^2-\zeta_0^2)-\tfrac{\mu^2\delta}{3}\zeta_0^2\bigl(1-\tfrac{\zeta_0}{\zeta}\bigr)\ge\tfrac{\mu^2\delta}{6}(\zeta^2-3\zeta_0^2)\to\infty$. Hence $x$ is unbounded. In fact the growth is exponential: since $u''+\tfrac{2}{\zeta}u'=\zeta^{-1}(\zeta u)''$, the equation is $(\zeta u)''=\mu^2\zeta u$, so $u=(a\,e^{\mu\zeta}+b\,e^{-\mu\zeta})/\zeta$, and the data $u(\zeta_0)=\delta$, $u'(\zeta_0)=\mu\delta$ give $a=\delta(1+2\mu\zeta_0)e^{-\mu\zeta_0}/(2\mu)>0$.
\end{proof}

\begin{theorem}\label{thm:lin_n5_instab}
The origin of \eqref{tvls_n5} is unstable for $n=5$ when shifting about $\bar y_0=(\Omega^{-1/5},0)$.
\end{theorem}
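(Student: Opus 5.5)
The plan is to negate the definition of stability in Definition~\ref{def:lyap} directly, using Lemma~\ref{lem:pert_growth}. Because \eqref{tvls_n5} is linear, stability at the origin would require a single $\delta>0$ such that every solution with $\|x(\zeta_0)\|<\delta$ stays in the ball of radius $\varepsilon$ for all $\zeta\ge\zeta_0$. We therefore fix an arbitrary initial distance $\zeta_0>0$, take $\varepsilon=1$ (any value works), and show that for every $\delta>0$ some initial condition of norm less than $\delta$ produces an unbounded trajectory. This is the negation of stability, and it also rules out uniform stability.

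First, observe that the matrix $A(\zeta)$ in \eqref{eq:A_n5} splits exactly as $A_0+B(\zeta)$. Here $A_0$ is the matrix of Lemma~\ref{lem:lin_growth}, with $\mu=\sqrt{\tfrac56\Omega^{1/5}}>0$ because $\Omega>0$, and $B(\zeta)=\mathrm{diag}(0,-2/\zeta)$. So the hypotheses of Lemma~\ref{lem:pert_growth} hold verbatim for \eqref{tvls_n5}.

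Next, given $\delta>0$, choose $\delta'=\delta/(2\|v_+\|)$ with $v_+=(1,\mu)^T$. The initial datum $x(\zeta_0)=\delta' v_+$ then has norm $\delta/2<\delta$. By Lemma~\ref{lem:pert_growth}, the corresponding solution is unbounded on $[\zeta_0,\infty)$. It satisfies $x_1(\zeta)\ge\tfrac{\mu^2\delta'}{6}(\zeta^2-3\zeta_0^2)$, and more precisely $x_1(\zeta)=(a e^{\mu\zeta}+b e^{-\mu\zeta})/\zeta$ with $a>0$. Hence $\|x(\zeta)\|$ exceeds $\varepsilon=1$ for some finite $\zeta>\zeta_0$, which contradicts stability.

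There is no real obstacle here; the work was done in Lemma~\ref{lem:pert_growth}. The only point to handle carefully is that the construction must work for every $\zeta_0$ and every $\delta$, so that instability holds in the non-uniform sense of Definition~\ref{def:lyap} and not merely the failure of uniformity. The argument above already covers this. It may also be worth noting that the unbounded growth comes from the positive eigenvalue $\mu$ of $A_0$ and that the decaying damping term $-2/\zeta$ cannot suppress it. The explicit solution $(\zeta u)''=\mu^2\zeta u$ shows this directly.
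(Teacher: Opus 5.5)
Your proposal is correct and follows the paper's proof. Both write $A(\zeta)=A_0+B(\zeta)$ with $B(\zeta)=\mathrm{diag}(0,-2/\zeta)$ and use Lemma~\ref{lem:pert_growth} to obtain, for every $\zeta_0$ and every $\delta$, an arbitrarily small initial datum along $v_+$ whose trajectory is unbounded. Your choices $\delta'=\delta/(2\|v_+\|)$ and $\varepsilon=1$ just make explicit the negation of Definition~\ref{def:lyap} that the paper leaves implicit.
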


\begin{proof}
Apply Lemma~\ref{lem:pert_growth} with $B(\zeta)=\begin{pmatrix}0&0\\ 0&-2/\zeta\end{pmatrix}$ and $A_0$ from \eqref{eq:A_n5}. Boundedness of the origin would contradict Lemma~\ref{lem:pert_growth} for any $\delta>0$ and $\zeta_0$, so the origin is unstable.
\end{proof}

\begin{remark}
Theorem~\ref{thm:lin_n5_instab} is the $n=5$ specialization of the general odd-$n$ instability phenomenon: the unique real equilibrium $\bar y_0$ lies on a branch whose linearization has a positive stiffness term, producing saddle-type behavior in the absence of strong damping. The $-2/\zeta$ decay term does not prevent instability because it vanishes as $\zeta\to\infty$.
\end{remark}
\subsection{Nonlinear Instability for \texorpdfstring{$n=5$}{n=5}}

We recall the instability criterion (cf.\ \cite{SL}):

\begin{theorem}\emph{(\cite{SL})}\label{unstable_thm}
If, in a neighborhood $\mathcal{B}_0$ of the origin, there exists a continuously differentiable, decrescent function $V(x,\zeta)$ such that
\begin{enumerate}
\item [i)] $V(0,\zeta)=0$ for all $\zeta\ge\zeta_0$,
\item [ii)] $V(x,\zeta_0)$ assumes strictly positive values arbitrarily close to the origin,
\item [iii)] $V_\zeta(x,\zeta)$ is locally positive definite in $\mathcal{B}_0$,
\end{enumerate}
then the equilibrium is unstable.
\end{theorem}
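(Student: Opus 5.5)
This is the classical Lyapunov--Chetaev instability argument, and I would prove it by contradiction against Definition~\ref{def:lyap}. First I fix what the hypotheses mean quantitatively. \emph{Decrescent} is taken to mean that there is a class-$\mathcal K$ function $\beta$ with $|V(x,\zeta)|\le\beta(\|x\|)$ for all $x\in\mathcal B_0$ and $\zeta\ge\zeta_0$. \emph{Locally positive definite} for $V_\zeta$ is taken to mean that there is a class-$\mathcal K$ function $\alpha$ with $V_\zeta(x,\zeta)\ge\alpha(\|x\|)$ on $\mathcal B_0$. Here $V_\zeta$ is the derivative of $V$ along trajectories, $\partial_\zeta V+\nabla_x V\cdot f$.

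Assume, for contradiction, that the origin is stable. Choose $\varepsilon>0$ so small that the closed ball $\overline{B}_\varepsilon$ lies inside $\mathcal B_0$. Stability supplies a $\delta>0$ such that trajectories starting in $B_\delta$ at time $\zeta_0$ remain in $B_\varepsilon$ for all $\zeta\ge\zeta_0$. Since $f$ in \eqref{eq:f_general} is smooth for $\zeta>0$, such a trajectory exists globally: it cannot blow up while it stays in a compact ball. By hypothesis (ii), I pick $x_0$ with $\|x_0\|<\delta$ and $a:=V(x_0,\zeta_0)>0$.

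Along the resulting solution $x(\zeta)$, hypothesis (iii) gives $\frac{d}{d\zeta}V(x(\zeta),\zeta)\ge\alpha(\|x(\zeta)\|)\ge0$. Hence $V(x(\zeta),\zeta)\ge a$ for all $\zeta\ge\zeta_0$. The key step is to convert this into a lower bound on $\|x(\zeta)\|$ using the decrescent bound. From $a\le V(x(\zeta),\zeta)\le\beta(\|x(\zeta)\|)$ we get $\|x(\zeta)\|\ge r:=\beta^{-1}(a)>0$, so the trajectory stays uniformly away from the origin. Feeding this back gives $\frac{d}{d\zeta}V(x(\zeta),\zeta)\ge\alpha(r)>0$, and integrating yields
\[
V(x(\zeta),\zeta)\ge a+\alpha(r)(\zeta-\zeta_0)\to\infty .
\]
This contradicts $V(x(\zeta),\zeta)\le\beta(\varepsilon)$, which holds because $x(\zeta)\in B_\varepsilon$. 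Therefore the origin is unstable. Hypothesis (i) is consistent with the decrescent bound and ensures $V$ is anchored at the equilibrium; it is not otherwise used.

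The main obstacle is this conversion step. Without the decrescent upper bound, $V$ could grow in $\zeta$ while $x(\zeta)\to0$, and the argument would fail. So the precise definitions must be stated as above; if the source uses a slightly different notion, such as $V_\zeta>0$ only off the origin, I would first upgrade it to a class-$\mathcal K$ lower bound on the compact annulus $r\le\|x\|\le\varepsilon$, uniformly in $\zeta$, before running the integration.
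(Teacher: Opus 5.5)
The paper gives no proof of this theorem; it quotes it from \cite{SL} and only applies it in Theorem~\ref{unstable2_n5}, so there is no in-paper argument to compare against. Your proof is correct, and it is the standard proof of Lyapunov's first instability theorem for nonautonomous systems:
\begin{enumerate}
\item[i)] assumed stability confines the trajectory to $B_\varepsilon\subset\mathcal B_0$;
\item[ii)] monotonicity of $V$ along the trajectory gives $V\ge a>0$;
\item[iii)] the decrescent upper bound keeps $\|x(\zeta)\|\ge\beta^{-1}(a)$;
\item[iv)] the resulting uniform bound $V_\zeta\ge\alpha(r)>0$ makes $V$ unbounded on a set where it is bounded.
\end{enumerate}

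You only ever use the upper half of your decrescence bound. This matters because the definition in \cite{SL} is one-sided: $V(x,\zeta)\le V_1(x)$ for a time-invariant continuous positive definite $V_1$. The majorant $\beta(s)=s+\max_{\|x\|\le s}V_1(x)$ is class-$\mathcal K$ and gives exactly what your argument needs. Similarly, compactness of the annulus $r\le\|x\|\le\varepsilon$ turns ``$V_\zeta\ge V_0(x)$'' into the uniform constant $\alpha(r)>0$, as you anticipated in your last paragraph.
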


\begin{theorem}\label{unstable2_n5}
The equilibrium point $x\equiv 0$ of the shifted $n=5$ system \eqref{system1_n5_plus} is unstable.
\end{theorem}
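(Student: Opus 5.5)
The plan is to apply the Chetaev-type criterion of Theorem~\ref{unstable_thm} to the shifted system \eqref{system1_n5_plus}, with the indefinite quadratic function
\[
V(x,\zeta)=x_1x_2 .
\]
This is the natural choice for a saddle: the linear part $A_0$ of Lemma~\ref{lem:lin_growth} has unstable eigenvector $(1,\mu)^T$, along which $x_1x_2>0$.

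\textbf{Step 1: isolate the linear part.} Put $a=\Omega^{-1/5}$ and expand $\Omega(a+x_1)^5-1$ binomially. The constant term cancels and the linear term is $5\Omega a^4x_1=5\Omega^{1/5}x_1$. Hence the second equation of \eqref{system1_n5_plus} reads
\[
(x_2)_\zeta=\mu^2x_1+h(x_1)-\tfrac{2}{\zeta}x_2,\qquad \mu^2=\tfrac56\Omega^{1/5},
\]
where $h(x_1)=\tfrac{\Omega}{6}\sum_{k=2}^{5}\binom5k a^{5-k}x_1^k$ is a polynomial with $|h(x_1)|\le K|x_1|^2$ for $|x_1|\le 1$. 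On a small ball $\mathcal B_0=\{\|x\|<r\}$ with $Kr\le\mu^2/4$ we then get $|x_1h(x_1)|\le\tfrac{\mu^2}{4}x_1^2$.

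\textbf{Step 2: compute the derivative along trajectories.} Along trajectories,
\[
V_\zeta=x_2^2+\mu^2x_1^2+x_1h(x_1)-\tfrac{2}{\zeta}x_1x_2 .
\]
The only troublesome term is the damping cross term. For $\zeta\ge\zeta_0$ we bound it by $\tfrac{2}{\zeta}|x_1x_2|\le\tfrac{1}{\zeta_0}(x_1^2+x_2^2)$, which gives
\[
V_\zeta\ge\Bigl(1-\tfrac1{\zeta_0}\Bigr)x_2^2+\Bigl(\tfrac{3\mu^2}{4}-\tfrac1{\zeta_0}\Bigr)x_1^2 .
\]
Choosing the explicit starting distance $\zeta_0>\max\{2,\,4/\mu^2\}$ makes both coefficients at least $\min\{\tfrac12,\tfrac{\mu^2}{2}\}$. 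Thus $V_\zeta\ge c\|x\|^2$ on $\mathcal B_0$, which is condition iii).

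\textbf{Step 3: check the remaining hypotheses.} Conditions i) and ii) are immediate: $V(0,\zeta)=0$, and $V(\epsilon,\epsilon)=\epsilon^2>0$ for arbitrarily small $\epsilon$. Decrescence follows from $|V|\le\tfrac12\|x\|^2$, independently of $\zeta$. Theorem~\ref{unstable_thm} then yields instability for every initial distance $\zeta_0>\max\{2,4/\mu^2\}$.

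\textbf{Main obstacle: small initial distances.} The step I expect to require care is the case where $\zeta_0$ is not large, so the $-2x_2/\zeta$ damping can dominate and $V_\zeta$ need not be positive near $\zeta_0$. I would handle this by continuous dependence on initial data on the compact interval $[\zeta_0,\zeta_1]$, with $\zeta_1$ large. The time-$\zeta_1$ flow map is a local diffeomorphism fixing the origin, so initial data arbitrarily close to $0$ at $\zeta_0$ produce data arbitrarily close to $0$ at $\zeta_1$ lying in the Chetaev region $\{V>0\}$; one can target such points by pulling back $(\epsilon,\epsilon)$ through the inverse flow. Instability from $\zeta_1$ then transfers back to $\zeta_0$.

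If this transfer proves awkward, I would first try $V=\zeta^{2}x_1x_2$ instead, with the aim of absorbing the damping term exactly. I would also keep Lemma~\ref{lem:pert_growth} in mind as a consistency check: it gives linear growth along the same direction $(1,\mu)^T$.
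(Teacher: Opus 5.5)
Your argument is correct, but it uses a different Chetaev function from the paper's. The paper takes $V=g(x_1)\,x_2+6x_2^2/\zeta$ with the full nonlinearity $g(x_1)=\Omega(x_1+\Omega^{-1/5})^5-1$. The correction $6x_2^2/\zeta$ is chosen so that the damping cross term $-\tfrac{2}{\zeta}g\,x_2$ cancels exactly, leaving $V_\zeta=\tfrac16 g^2+5\Omega(x_1+\Omega^{-1/5})^4x_2^2-30x_2^2/\zeta^2$. Positivity then needs only $x_1+\Omega^{-1/5}\ge\tfrac12\Omega^{-1/5}$ on the ball and $\zeta_0^2\ge 192\,\Omega^{-1/5}$.

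What the paper's route buys is that there is no Taylor split of the nonlinearity and no remainder estimate. The leftover damping error is also $O(\zeta^{-2})$ and diagonal, so the threshold has the natural size $\zeta_0\gtrsim\mu^{-1}$. Your $V=x_1x_2$ is the generic saddle function. You pay with the linear/nonlinear split and a Young-inequality absorption of an $O(\zeta^{-1})$ cross term, but the computation is elementary and transparent. The unweighted Young inequality gives the less natural threshold $\max\{2,4/\mu^2\}$; the weighted version $\tfrac{2}{\zeta}|x_1x_2|\le\tfrac1\zeta\bigl(\mu x_1^2+x_2^2/\mu\bigr)$ would restore $\zeta_0\gtrsim\mu^{-1}$.

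Both routes verify the hypotheses of Theorem~\ref{unstable_thm} only for large initial distance, and the paper simply chooses $\zeta_0$ large. Your flow-map transfer covers every $\zeta_0>0$: instability at $\zeta_1$ implies instability at any $\zeta_0<\zeta_1$, because the inverse flow on $[\zeta_0,\zeta_1]$ exists near the equilibrium and is continuous at $0$. On this point your proposal is more complete than the written proof.

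One caution about your fallback: $V=\zeta^2x_1x_2$ does kill the cross term exactly, but it is not decrescent, so it does not fit Theorem~\ref{unstable_thm} as stated. The paper's additive correction $6x_2^2/\zeta$ is exactly a way to get the cancellation while keeping $V$ decrescent.
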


\begin{proof}
Define the scalar function (the function of~\cite{AMR2018} for odd $n$, at $n=5$)
\begin{equation}\label{V_instab_n5}
V(x,\zeta)=\Bigl(\Omega(x_1+\Omega^{-1/5})^5-1\Bigr)x_2+\frac{6x_2^2}{\zeta}.
\end{equation}
Clearly $V(0,\zeta)=0$. For $x_1=0$ and $x_2\neq 0$, $V(x,\zeta)=\frac{6x_2^2}{\zeta}>0$, so condition (ii) holds. Differentiating $V$ along trajectories of \eqref{system1_n5_plus} yields
\begin{align}
V_\zeta
&=\frac{1}{6}\Bigl(\Omega(x_1+\Omega^{-1/5})^5-1\Bigr)^2
+5\Omega (x_1+\Omega^{-1/5})^4 x_2^2
-\frac{30x_2^2}{\zeta^2}. \label{Vdot_instab_n5}
\end{align}
Fix $r>0$ sufficiently small so that $x_1+\Omega^{-1/5}\ge \frac12\Omega^{-1/5}$ in $B_r(0)$. Then
\[
5\Omega (x_1+\Omega^{-1/5})^4\ge 5\Omega\left(\tfrac12\Omega^{-1/5}\right)^4=\frac{5}{16}\Omega^{1/5}.
\]
Choose $\zeta_0$ large enough such that $\frac{30}{\zeta_0^2}\le \frac{5}{32}\Omega^{1/5}$, i.e.
\begin{equation}\label{zeta0_choice_n5}
\zeta_0^2\ge \frac{192}{\Omega^{1/5}}.
\end{equation}
Then for $\zeta\ge\zeta_0$ and $\|x\|<r$,
\[
5\Omega (x_1+\Omega^{-1/5})^4-\frac{30}{\zeta^2}\ge \frac{5}{32}\Omega^{1/5}>0,
\]
and hence \eqref{Vdot_instab_n5} implies $V_\zeta\ge V_0(x)$, where
\begin{equation}
V_0(x)=\frac{1}{6}\Bigl(\Omega(x_1+\Omega^{-1/5})^5-1\Bigr)^2+\frac{5}{32}\Omega^{1/5}x_2^2
\end{equation}
is locally positive definite. Therefore the conditions of Theorem~\ref{unstable_thm} hold and $x\equiv 0$ is unstable.
\end{proof}

\subsection{Invariant Manifold Structure}
Consider the autonomous reduction
\begin{equation}
\ddot{\nu}-\frac14(\nu-\nu^5)=0.
\end{equation}

\begin{theorem}[Hyperbolic saddle in the reduced phase plane]\label{thm:saddle_nu}
The equilibrium $\nu=0$ of \eqref{eq7} is a hyperbolic saddle with one-dimensional stable and unstable manifolds.
\end{theorem}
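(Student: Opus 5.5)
We would write \eqref{eq7} as the planar first-order system
\[
\dot\nu=w,\qquad \dot w=\tfrac14(\nu-\nu^5),
\]
so that $(0,0)$ is an equilibrium, and linearize there. The Jacobian at the origin is
\[
J=\begin{pmatrix}0&1\\ \tfrac14&0\end{pmatrix},
\]
since $\frac{d}{d\nu}\bigl[\tfrac14(\nu-\nu^5)\bigr]\big|_{\nu=0}=\tfrac14$. Its characteristic polynomial is $\lambda^2-\tfrac14$, so the eigenvalues are $\lambda=\pm\tfrac12$. They are real, nonzero and of opposite sign, so the origin is hyperbolic with $\det J=-\tfrac14<0$. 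The eigenvectors are $(1,\tfrac12)^T$ (unstable) and $(1,-\tfrac12)^T$ (stable).

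The next step invokes the Hartman--Grobman theorem. The vector field is polynomial, hence $C^\infty$, so the flow near the origin is topologically conjugate to that of $J$, which is a saddle. The stable manifold theorem then gives one-dimensional local $C^\infty$ stable and unstable manifolds. They are tangent at the origin to the lines $w=-\nu/2$ and $w=\nu/2$ respectively. Globalizing by the flow yields one-dimensional immersed curves $W^s$ and $W^u$.

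To make the statement concrete, we would identify these manifolds with the zero level of the first integral of Lemma~\ref{lem:first_integral}. Near the origin the set
\[
\tfrac12 w^2-\tfrac18\nu^2+\tfrac1{24}\nu^6=0
\]
factors as $w=\pm\tfrac12\nu\sqrt{1-\nu^4/3}$. This is a union of two smooth curves crossing transversally at the origin with slopes $\pm\tfrac12$. They are invariant, and they coincide locally with $W^u$ and $W^s$ by tangency and uniqueness of the invariant manifolds. Globally, the separatrix solution \eqref{eq:nu_simplified}, $\nu=\pm3^{1/4}\sqrt{\sech(t-t_0)}$, tends to $0$ as $t\to\pm\infty$. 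This exhibits the branches of $W^s$ and $W^u$ joining into the two homoclinic loops.

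No step is a real obstacle. The only point needing care is matching the branches of the zero level set with $W^s$ and $W^u$, which the sign of $w/\nu$ settles: on $w=\nu/2$ the quantity $|\nu|$ grows, and on $w=-\nu/2$ it decays.
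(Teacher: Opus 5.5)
Your proposal is correct and follows essentially the paper's route. You linearize at $\nu=0$ to get real eigenvalues $\pm\tfrac12$ of opposite sign, invoke the stable manifold theorem for one-dimensional invariant manifolds tangent to the eigendirections, and observe that the Schuster separatrix $3^{1/4}\sqrt{\sech(t-t_0)}$ lies on both. Your identification of $W^s$ and $W^u$ with the branches $w=\pm\tfrac12\nu\sqrt{1-\nu^4/3}$ of the zero level set of the first integral, with the sign check on $w/\nu$ and both loops related by $\nu\mapsto-\nu$, sharpens the paper's one-line assertion rather than replacing it.
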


\begin{proof}
Linearizing \eqref{eq7} at $\nu=0$ gives $\ddot{\eta}-\eta/4=0$, whose first-order form has Jacobian eigenvalues $\pm\tfrac12$. By the stable manifold theorem, there exist one-dimensional invariant manifolds, as smooth as the vector field, tangent to the corresponding eigendirections; the Schuster profile \eqref{eq:nu_simplified} lies on both of them. More precisely, since $\nu_{\rm Schuster}(t)=3^{1/4}\sqrt{\sech(t-t_0)}\to0$ as $t\to\pm\infty$, this orbit departs along the unstable manifold of $\nu=0$ and returns along its stable manifold, forming a homoclinic orbit to the saddle. The homoclinic structure is the geometric expression of the separatrix role of the Schuster solution in the global phase portrait (Figure~\ref{fig:phase_portrait}).
\end{proof}

The remaining equilibria of \eqref{eq7} are $\nu=\pm1$, the minima of $W$ at the level $\nu_0=-\tfrac1{12}$. Linearizing $\ddot\nu=\tfrac14(\nu-\nu^5)$ at $\nu_c$ gives $\ddot\eta=\tfrac14(1-5\nu_c^4)\eta$, with eigenvalues $\pm\tfrac12\sqrt{1-5\nu_c^4}$: $\pm\tfrac12$ at $\nu_c=0$ (the saddle) and $\pm i$ at $\nu_c=\pm1$. Since \eqref{eq7} is conservative, $\nu=\pm1$ are nonlinear centers, surrounded by the periodic orbits with $-\tfrac1{12}<\nu_0<0$; they correspond to the singular solutions $\theta=\pm(2\xi)^{-1/2}$.

\begin{figure}[htbp]
\centering
\includegraphics[width=0.78\linewidth]{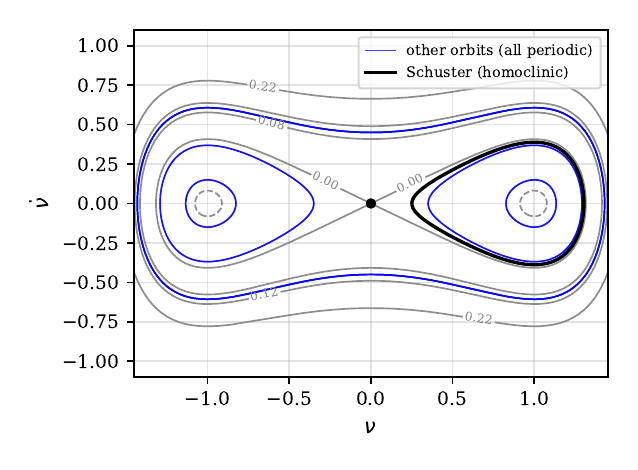}\\[0.6em]
\caption{Phase portrait of the reduced system \eqref{eq7} with level sets of the first integral (gray contours) and numerically integrated trajectories. The equilibrium at $\nu=0$ (black dot) is a hyperbolic saddle. The black curve is the Schuster solution, the homoclinic orbit that departs along the unstable manifold and returns along the stable manifold, forming the separatrix loops; it is color-matched to the Schuster curve in Figures~\ref{fig:energy_levels}, \ref{fig:integrable_nu}, and~\ref{fig:weierstrass_family}. Blue curves are other representative trajectories, all periodic since $W(\nu)$ grows like $\nu^6$: closed loops around $\nu=\pm1$ inside the separatrix ($-\tfrac1{12}<\nu_0<0$) and closed curves encircling both loops outside it ($\nu_0>0$), the type of the Weierstrass curve of Figure~\ref{fig:weierstrass_family}.}
\label{fig:phase_portrait}
\end{figure}

\begin{remark}[Relation to Section~\ref{sec:stability}]
The saddle in Theorem~\ref{thm:saddle_nu} governs the autonomous $\nu$-dynamics after Schuster reduction, whereas Theorems~\ref{thm:lin_n5_instab} and~\ref{unstable2_n5} concern the distance-varying system \eqref{system1_n5_plus} in the variable $\zeta$, an auxiliary model whose nonlinear term $+\Omega z^n/(n+1)$ has the opposite sign of the Lane--Emden term $-z^n$. Both mechanisms reinforce the non-attracting character of the $n=5$ odd-index equilibrium.
\end{remark}

\section{Variational Structure}
\label{sec:variational}
\begin{proposition}[Euler--Lagrange structure at $n=5$]\label{prop:euler_lagrange}
Formal variation of
\begin{equation}\label{eq:action_theta}
\mathcal{J}[\theta]
=
\int_0^\infty
\left(
\frac12 \theta_\xi^2
-\frac16 \theta^6
\right)\xi^2\,d\xi
\end{equation}
yields the Lane--Emden equation \eqref{eq:lane_emden_n5}. The Schuster solution $\theta_S$ is a critical point, and the second variation at $\theta_S$ equals the quadratic form $Q[\phi]=\langle L\phi,\phi\rangle$ of the operator $L$ in Section~\ref{sec:linear_stability} (equivalently, twice the energy functional \eqref{eq:energy_functional}).
\end{proposition}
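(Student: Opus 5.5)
The plan is to compute the first and second variations of $\mathcal{J}$ directly, then identify the second variation at $\theta_S$ with the quadratic form of $L$.

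\textbf{First variation.} Take $\theta+\varepsilon\phi$ with $\phi$ smooth and compactly supported in $(0,\infty)$, or more generally radial test functions for which boundary terms vanish. Differentiating at $\varepsilon=0$ gives
\[
\delta\mathcal{J}[\theta;\phi]=\int_0^\infty\bigl(\theta_\xi\phi_\xi-\theta^5\phi\bigr)\xi^2\,d\xi .
\]
Integrating the first term by parts, the boundary term $[\xi^2\theta_\xi\phi]_0^\infty$ vanishes for compactly supported $\phi$. The result is
\[
\delta\mathcal{J}=-\int_0^\infty\Bigl[\tfrac{1}{\xi^2}(\xi^2\theta_\xi)_\xi+\theta^5\Bigr]\phi\,\xi^2d\xi .
\]
Since $\phi$ is arbitrary, the fundamental lemma of the calculus of variations gives exactly \eqref{LE} with $n=5$. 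Because $\theta_S$ solves \eqref{LE} by Section~\ref{sec:schuster}, it is a critical point.

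\textbf{Second variation.} Differentiating twice in $\varepsilon$ gives
\[
\delta^2\mathcal{J}[\theta_S;\phi]=\int_0^\infty\bigl(\phi_\xi^2-5\theta_S^4\phi^2\bigr)\xi^2\,d\xi .
\]
A second integration by parts turns this into $\langle -\Delta_r\phi-5\theta_S^4\phi,\phi\rangle_{L^2(\xi^2d\xi)}=\langle L\phi,\phi\rangle$. This matches $L=-\Delta_r-5\theta_S^4$ from the abstract and Section~\ref{sec:linear_stability}, and it equals twice the energy functional $\tfrac12\langle L\phi,\phi\rangle$.

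\textbf{Main obstacle.} The delicate point is the word ``formal'', which reflects two issues.

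First, $\mathcal{J}[\theta_S]$ must be finite. Using $\theta_S\sim\sqrt3/\xi$, we get $\theta_{S,\xi}^2\xi^2\sim 3\xi^{-2}$ and $\theta_S^6\xi^2\sim 27\xi^{-4}$, so both terms are integrable at infinity, and regularity at $0$ is clear.

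Second, the identification of $Q$ with $\langle L\phi,\phi\rangle$ should hold on the form domain of the Friedrichs extension, not just on compactly supported test functions. I would state the identity first for $\phi\in C_c^\infty$ and then extend it by density in the $H^1(\xi^2d\xi)$ norm. Here boundedness of the potential $5\theta_S^4\le5$ makes $Q$ continuous in that norm.

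The scaling mode $\phi_0$ is not in the domain, since it decays only like $\xi^{-1}$ and hence is not in $L^2(\xi^2d\xi)$. This is why the statement is formal rather than a claim about $\phi_0$ itself.
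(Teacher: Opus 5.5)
Your proposal is correct and follows the paper's proof essentially step for step. You take the first variation against compactly supported $\phi$, integrate by parts with the boundary term $\xi^2\theta_\xi\phi$ vanishing, apply the fundamental lemma to obtain the $n=5$ Lane--Emden equation, and identify the second variation $\int_0^\infty(\phi_\xi^2-5\theta_S^4\phi^2)\xi^2\,d\xi$ with $Q[\phi]=2\mathcal{E}[\phi]$. Your extra checks (finiteness of $\mathcal{J}[\theta_S]$ via $\theta_S\sim\sqrt3/\xi$, and extension to the form domain by density) go beyond the paper but are harmless; just note that off the operator domain one should speak of the closed form $Q$ rather than of $\langle L\phi,\phi\rangle$.
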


\begin{proof}
For a smooth, compactly supported perturbation $\phi$,
\[
\delta\mathcal{J}[\theta](\phi)
=
\int_0^\infty
\left(
\theta_\xi\phi_\xi-\theta^5\phi
\right)\xi^2\,d\xi.
\]
Integrating the kinetic term by parts (the boundary term $\xi^2\theta_\xi\phi$ vanishes at $\xi=0$ because of the factor $\xi^2$ and at infinity because $\phi$ has compact support) gives
\[
\delta\mathcal{J}[\theta](\phi)
=
-\int_0^\infty
\left(
\frac{1}{\xi^2}\frac{d}{d\xi}\!\left(\xi^2\theta_\xi\right)+\theta^5
\right)\phi\,\xi^2\,d\xi,
\]
which vanishes for all admissible $\phi$ if and only if \eqref{eq:lane_emden_n5} holds. The second variation at $\theta_S$ is
\[
\delta^2\mathcal{J}[\theta_S](\phi)
=\int_0^\infty\left(\phi_\xi^2-5\theta_S^4\phi^2\right)\xi^2\,d\xi
=Q[\phi]=2\,\mathcal{E}[\phi],
\]
which is twice the quadratic form \eqref{eq:energy_functional} associated with $L$.
\end{proof}

\section{Linear Stability of the \texorpdfstring{$n=5$}{n=5} Lane--Emden Solutions}
\label{sec:linear_stability}
We analyze the linear stability of the $n=5$ Lane--Emden equation by studying small perturbations about the Schuster solution. This analysis complements the nonlinear instability results of Section~\ref{sec:stability} and the saddle structure of Theorem~\ref{thm:saddle_nu}.

\begin{remark}[Two linearizations]\label{rem:two_linearizations}
Section~\ref{sec:stability} linearizes the distance-varying system \eqref{system1_n5_plus} about the constant equilibrium $(\Omega^{-1/5},0)$ and establishes exponential instability (Theorems~\ref{thm:lin_n5_instab} and~\ref{unstable2_n5}). The present section linearizes about the \emph{boundary-value} profile $\theta_S$ in the original variable $\xi$. Proposition~\ref{prop:scaling} and Remark~\ref{rem:phi0_bc} show that this problem admits the neutral tangent mode $\phi_0=\tfrac12\theta_S+\xi\theta_S'$ at the threshold $0$ of $\sigma_{\mathrm{ess}}(\overline{L})$; since $\phi_0\notin\mathcal{H}$, $0$ is not an $L^2$ eigenvalue. The boundary-value operator $L=-\Delta_r-5\theta_S^4$ additionally carries a genuine negative-energy bound state (Theorem~\ref{thm:marginal_spectral}), so it too is spectrally unstable. The hyperbolic saddle in the $\nu$-variables and the negative spectrum in $\xi$ are different manifestations of the same critical index $n=5$.
\end{remark}
\subsection{Perturbation about the Schuster Solution}
Let $\theta_S(\xi)$ denote the Schuster solution,
\begin{equation}\label{eq:schuster_solution}
\theta_S(\xi)=\left(1+\frac{\xi^2}{3}\right)^{-1/2},
\end{equation}
which satisfies the Lane--Emden equation
\begin{equation}\label{eq:lane_emden_n5}
\frac{1}{\xi^2}\frac{d}{d\xi}\left(\xi^2\frac{d\theta}{d\xi}\right)+\theta^5=0.
\end{equation}
We consider a perturbed solution of the form
\begin{equation}\label{eq:perturbation_ansatz}
\theta(\xi)=\theta_S(\xi)+\varepsilon\,\phi(\xi),\qquad 0<\varepsilon\ll1,
\end{equation}
and substitute \eqref{eq:perturbation_ansatz} into \eqref{eq:lane_emden_n5}. Expanding
\begin{equation}
(\theta_S+\varepsilon\phi)^5
=\theta_S^5+5\varepsilon\,\theta_S^4\phi+O(\varepsilon^2).
\end{equation}
Since $\theta_S$ satisfies \eqref{eq:lane_emden_n5}, the $O(\varepsilon)$ terms give the linearized equation
\begin{equation}\label{eq:linearized_perturbation}
\frac{1}{\xi^2}\frac{d}{d\xi}\left(\xi^2\frac{d\phi}{d\xi}\right)
+5\theta_S(\xi)^4\,\phi=0.
\end{equation}
Using $\theta_S(\xi)^4=(1+\xi^2/3)^{-2}$, equation \eqref{eq:linearized_perturbation} becomes
\begin{equation}\label{eq:sturm_liouville_form}
\frac{1}{\xi^2}\frac{d}{d\xi}\left(\xi^2\phi_\xi\right)
+5\left(1+\frac{\xi^2}{3}\right)^{-2}\phi=0.
\end{equation}
\subsection{Self-Adjoint Structure}
The linearized equation about the Schuster solution takes the form
\begin{equation}\label{eq:self_adjoint_form}
\left(\xi^2\phi_\xi\right)_\xi
+5\xi^2\left(1+\frac{\xi^2}{3}\right)^{-2}\phi=0.
\end{equation}
This equation is of Sturm--Liouville type
\begin{equation}\label{eq:SL_general}
-(p(\xi)\phi_\xi)_\xi + q(\xi)\phi = 0,
\end{equation}
with
\[
p(\xi)=\xi^2,
\qquad
q(\xi)=-5\xi^2\left(1+\frac{\xi^2}{3}\right)^{-2}.
\]
Both coefficients are smooth for $\xi>0$, with $p>0$ and $q<0$; the sign of $q$ is the one used in the normal form \eqref{eq:SL_normal} below.

\medskip
\noindent
\textbf{Weighted Hilbert space formulation.}
Equation \eqref{eq:self_adjoint_form} defines a symmetric differential operator
\begin{equation}\label{eq:operator_def}
\mathcal{L}\phi
=
-\frac{1}{\xi^2}\frac{d}{d\xi}\!\left(\xi^2\frac{d\phi}{d\xi}\right)
-5\left(1+\frac{\xi^2}{3}\right)^{-2}\phi,
\end{equation}
acting on the weighted space
\[
L^2\big((0,\infty),\xi^2 d\xi\big),
\]
which corresponds to the natural radial measure in three dimensions.
Integration by parts shows that for sufficiently regular functions
$\phi,\psi$ with compact support,
\begin{equation}\label{eq:symmetry}
\int_0^\infty \xi^2 \psi\, \mathcal{L}\phi\, d\xi
=
\int_0^\infty \xi^2 \phi\, \mathcal{L}\psi\, d\xi,
\end{equation}
so $\mathcal{L}$ is formally self-adjoint with respect to the inner product
\[
\langle \phi,\psi\rangle
=
\int_0^\infty \xi^2 \phi(\xi)\psi(\xi)\,d\xi.
\]

\medskip
\noindent
\textbf{Conservation of the Wronskian.}
Let $\phi_1$ and $\phi_2$ be two solutions of
\eqref{eq:self_adjoint_form}. Their Wronskian is
\begin{equation}\label{eq:wronskian}
W(\phi_1,\phi_2)
=
\xi^2\bigl(\phi_1\phi_{2,\xi}-\phi_2\phi_{1,\xi}\bigr).
\end{equation}
Differentiating and using \eqref{eq:self_adjoint_form}, we obtain
\[
\frac{dW}{d\xi}
=
\xi^2(\phi_1\phi_{2,\xi\xi}-\phi_2\phi_{1,\xi\xi})
+2\xi(\phi_1\phi_{2,\xi}-\phi_2\phi_{1,\xi}),
\]
and substituting
\[
\phi_{\xi\xi}
=
-\frac{2}{\xi}\phi_\xi
-5\left(1+\frac{\xi^2}{3}\right)^{-2}\phi,
\]
all terms cancel identically, giving
\[
\frac{dW}{d\xi}=0.
\]
Thus $W(\phi_1,\phi_2)$ is constant on $(0,\infty)$.

This conservation law ensures that any two linearly independent solutions form a global fundamental system and permits the classification of oscillatory versus non-oscillatory behavior via Sturm comparison theory.

\medskip
\noindent
\textbf{Variational structure.}
Equation \eqref{eq:self_adjoint_form} arises as the Euler--Lagrange equation of the quadratic functional
\begin{equation}\label{eq:energy_functional}
\mathcal{E}[\phi]
=
\frac{1}{2}
\int_0^\infty
\left(
\xi^2 \phi_\xi^2
-
5\xi^2\left(1+\frac{\xi^2}{3}\right)^{-2}\phi^2
\right)d\xi.
\end{equation}
The linearized dynamics are therefore governed by an energy principle: the sign structure of $\mathcal{E}$ determines whether perturbations can grow, and the existence of a zero mode reflects degeneracy of this quadratic form.

\medskip
\noindent
\textbf{Endpoint classification.}
At $\xi=0$, regularity of radial solutions imposes
\[
\phi(\xi)=\phi(0)+O(\xi^2),
\qquad
\phi_\xi(0)=0.
\]
At infinity, since
\[
q(\xi)=-5\xi^2(1+\xi^2/3)^{-2}\sim -45\,\xi^{-2},
\qquad \xi\to\infty,
\]
the coefficient $q(\xi)/p(\xi)=q(\xi)/\xi^2\sim -45\,\xi^{-4}\to0$ (we divide the Sturm--Liouville form \eqref{eq:SL_general} through by the leading coefficient $p(\xi)=\xi^2$, as required for the Weyl limit-point analysis), so the equation reduces asymptotically to
\[
\phi_{\xi\xi}+\frac{2}{\xi}\phi_\xi=0,
\]
whose solutions behave as
\[
\phi(\xi)\sim C_1 + \frac{C_2}{\xi}.
\]
Thus $\xi=\infty$ is of limit-point type, and no additional boundary condition is required there for self-adjointness.

\medskip
\noindent
\textbf{Spectral implications.}
Since $\mathcal{L}$ is formally self-adjoint and the potential $5\theta_S^4$ decays as $O(\xi^{-4})$, the Friedrichs extension has essential spectrum $[0,\infty)$ (Theorem~\ref{thm:ess_spec}). The scaling generator $\phi_0$ satisfies $\mathcal{L}\phi_0=0$ but is not square-integrable with weight $\xi^2$; thus $0$ is a threshold of the essential spectrum rather than an $L^2$ eigenvalue. Because the potential $-5\theta_S^4$ is attractive, the operator additionally admits a negative-energy bound state (Theorem~\ref{thm:marginal_spectral}). This spectral structure is the spectral instability of the $n=5$ Schuster configuration.

\subsection{Spectral Formulation of the Linearized Operator}
To place the linear stability problem in a functional-analytic framework, we define the linearized operator (identical to $\mathcal{L}$ in \eqref{eq:operator_def})
\begin{equation}
L\phi
:=
-\frac{1}{\xi^2}\frac{d}{d\xi}\left(\xi^2\frac{d\phi}{d\xi}\right)
-5\theta_S(\xi)^4\phi,
\label{def_L}
\end{equation}
acting on the Hilbert space
\begin{equation}
\mathcal{H}=L^2((0,\infty),\xi^2\,d\xi).
\end{equation}

Define the quadratic form
\begin{equation}\label{eq:quadratic_form}
Q[\phi]
=
\int_0^\infty
\left(
\phi_\xi^2-5\theta_S(\xi)^4\phi^2
\right)\xi^2\,d\xi,
\qquad
\phi\in C_c^\infty([0,\infty)),\ \phi'(0)=0.
\end{equation}
Let $D_{\mathrm{core}}$ be the completion of $C_c^\infty([0,\infty))$ with respect to the norm $\|\phi\|_{H}^2=Q[\phi]+\|\phi\|_{\mathcal{H}}^2$.

\begin{lemma}[Sturm--Liouville normal form]\label{lem:SL_normal}
With $p(\xi)=\xi^2$, $q(\xi)=-5\xi^2\theta_S(\xi)^4<0$, and $w(\xi)=\xi^2$, the eigenvalue problem $L\phi=\lambda\phi$ for the operator \eqref{eq:operator_def} is equivalent to
\begin{equation}\label{eq:SL_normal}
-(p(\xi)\phi')' + q(\xi)\phi = \lambda\,w(\xi)\phi,
\end{equation}
which is a singular Sturm--Liouville problem on $(0,\infty)$: the endpoint $\xi=0$ is singular ($p(0)=0$) and of limit-circle type, since both local solutions $\phi\sim1$ and $\phi\sim\xi^{-1}$ are square-integrable near $0$ with weight $\xi^2$, so a boundary condition is needed there (the Friedrichs extension selects the regular solution, $\phi$ bounded, i.e.\ $\xi\phi\to0$); the endpoint $\xi=\infty$ is of limit-point type \cite[Ch.~10]{Zettl}. The homogeneous linearized equation \eqref{eq:sturm_liouville_form} is the case $\lambda=0$.
\end{lemma}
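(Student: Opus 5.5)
The argument has three parts: the algebraic equivalence, the classification of the endpoint $\xi=0$, and the classification of the endpoint $\xi=\infty$.

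\textbf{Algebraic equivalence.} Multiply $L\phi=\lambda\phi$, with $L$ as in \eqref{def_L}, through by $w(\xi)=\xi^2$. Since $\theta_S^4=(1+\xi^2/3)^{-2}$, this gives
\[
-(\xi^2\phi')'-5\xi^2\theta_S^4\phi=\lambda\xi^2\phi ,
\]
which is exactly \eqref{eq:SL_normal} with $p=\xi^2$ and $q=-5\xi^2\theta_S^4$. Multiplication by $\xi^2$ is invertible on $(0,\infty)$, so the two problems are equivalent. Setting $\lambda=0$ recovers \eqref{eq:sturm_liouville_form}. On $(0,\infty)$ we have $p>0$ and $w>0$, and $1/p,q,w$ are locally integrable, so the interval interior is regular. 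At $\xi=0$, however, $1/p=\xi^{-2}$ is not integrable, so the left endpoint is singular.

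\textbf{The endpoint $\xi=0$.} I would write the equation as $\phi''+\frac{2}{\xi}\phi'+\bigl(5\theta_S^4+\lambda\bigr)\phi=0$. This has a regular singular point at $0$ with indicial equation $r(r+1)=0$, giving exponents $r=0$ and $r=-1$. Frobenius theory then yields solutions $\phi_1=1+O(\xi^2)$ and $\phi_2=\xi^{-1}(1+O(\xi))$. No logarithm arises: the exponents differ by $1$, but the coefficient functions are even in $\xi$, which kills the obstruction. I would check this directly with the substitution $\psi=\xi\phi$, which turns the equation into $\psi''+(5\theta_S^4+\lambda)\psi=0$, a regular equation at $0$ with solutions $\psi\sim1$ and $\psi\sim\xi$. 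Since $\int_0^1\xi^2\cdot\xi^{-2}\,d\xi<\infty$, both solutions lie in $L^2$ near $0$ with weight $\xi^2$ for every $\lambda$, so the endpoint is limit-circle. Friedrichs selects the boundary condition of the form domain, namely $\psi(0)=0$, i.e.\ $\xi\phi\to0$, and this condition picks out $\phi_1$.

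\textbf{The endpoint $\xi=\infty$.} Use the same reduction $\psi=\xi\phi$, which carries the weighted space unitarily onto $L^2(d\xi)$. The equation becomes $-\psi''+V\psi=\lambda\psi$ with $V=-5\theta_S^4=O(\xi^{-4})$, which is bounded near infinity. By Weyl's criterion, such a bounded potential is limit-point at $\infty$; an alternative route is the classical test that $V$ is bounded below there. Equivalently, for $\lambda=0$ the solutions behave like $\phi\sim 1$ and $\phi\sim \xi^{-1}$, and $\phi\sim1$ is not in $L^2(\xi^2d\xi)$ near infinity. This gives the limit-point classification directly, with no boundary condition needed at infinity, consistent with the earlier endpoint discussion.

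The main subtlety is stating the Friedrichs boundary condition correctly at the limit-circle endpoint $0$. I would handle it through the form-domain characterization: functions in the form domain satisfy $\psi\in H^1$ near $0$ and, under the unitary reduction, vanish at $0$.
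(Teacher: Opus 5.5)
Your proposal is correct and follows essentially the paper's reasoning. You multiply through by $\xi^2$, note that both local solutions $\phi\sim1$ and $\phi\sim\xi^{-1}$ lie in $L^2(\xi^2d\xi)$ near $0$ (limit circle), and observe that the asymptotic solutions $C_1+C_2/\xi$ at infinity are not square-integrable (limit point). Your reduction $\psi=\xi\phi$, which the paper itself uses in the proof of Theorem~\ref{thm:marginal_spectral}, makes the absence of logarithms, the Friedrichs condition $\psi(0)=0$, and the limit-point criterion at infinity somewhat more explicit than the paper's brief justification.
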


\begin{lemma}[Rayleigh quotient]\label{lem:rayleigh_nonneg}
Let $p,q,w$ be as in Lemma~\ref{lem:SL_normal}. If $\phi\in D_{\mathrm{core}}$ satisfies \eqref{eq:SL_normal} for some $\lambda\in\mathbb{R}$, then
\begin{equation}\label{eq:rayleigh}
\lambda
=
\frac{\displaystyle\int_0^\infty p(\xi)\,\phi_\xi(\xi)^2\,d\xi
-5\displaystyle\int_0^\infty \xi^2\theta_S(\xi)^4\,\phi(\xi)^2\,d\xi}
{\|\phi\|_{\mathcal{H}}^2}
=\frac{Q[\phi]}{\|\phi\|_{\mathcal{H}}^2}.
\end{equation}
The numerator is the quadratic form $Q[\phi]$ of \eqref{eq:quadratic_form}, which is \emph{not} sign-definite: the attractive potential $-5\theta_S^4$ permits trial functions with $Q[\phi]<0$. Consequently, $\overline{L}$ possesses negative spectrum (see Remark~\ref{rem:negative_spectrum}).
\end{lemma}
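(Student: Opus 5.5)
The identity \eqref{eq:rayleigh} follows by testing the eigenvalue equation against $\phi$ itself in the weighted inner product. The main work is then to exhibit explicit trial functions with $Q<0$.

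\textbf{Step 1: the identity.} Take $\phi\in D_{\mathrm{core}}$ satisfying \eqref{eq:SL_normal}. Multiply the equation by $\phi$ and integrate over $(\varepsilon,R)$. Integrating $-(p\phi')'\phi$ by parts gives $\int_\varepsilon^R p\phi_\xi^2\,d\xi$ plus the boundary terms $[-\xi^2\phi\phi_\xi]_\varepsilon^R$. These boundary terms vanish in the limit for the following reasons.

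\emph{At $\xi=0$:} by Lemma~\ref{lem:SL_normal}, the Friedrichs (regular) condition gives $\phi$ bounded with $\phi_\xi\to0$, so $\xi^2\phi\phi_\xi\to0$.

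\emph{At $\infty$:} membership in $D_{\mathrm{core}}$ gives $\int\xi^2(\phi_\xi^2+\phi^2)<\infty$ (the potential term is bounded, so $Q+\|\cdot\|^2$ controls the $H^1$-type norm). Hence $\xi^2|\phi\phi_\xi|$ is integrable, and along a sequence $R_k\to\infty$ it tends to $0$.

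Passing to the limit yields $\lambda\|\phi\|_{\mathcal H}^2=\int p\phi_\xi^2-5\int\xi^2\theta_S^4\phi^2=Q[\phi]$. Dividing by $\|\phi\|^2\neq0$ gives \eqref{eq:rayleigh}. An alternative is to argue by density: compute on $C_c^\infty$ and extend $Q$ by continuity in the $H$-norm, using that eigenfunctions lie in the form domain.

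\textbf{Step 2: $Q$ is not sign-definite.} Positivity is immediate, e.g.\ for a bump supported at large $\xi$, where $\theta_S^4\sim9\xi^{-4}$ is negligible against the kinetic term. For negativity I would use the scaling mode. Compute $\phi_0=\tfrac12\theta_S+\xi\theta_S'$ explicitly: it equals $\tfrac12(1-\xi^2/3)(1+\xi^2/3)^{-3/2}$. It solves $L\phi_0=0$ and has exactly one zero, at $\xi=\sqrt3$. Truncate it to $\phi_R=\phi_0\chi(\xi/R)$ with a smooth cutoff $\chi$. Because $L\phi_0=0$, one has $Q[\phi_R]=\int\xi^2\phi_0^2\,\chi_R'^2\,d\xi$. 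Since $\phi_0\sim-\tfrac{3\sqrt3}{2}\xi^{-1}$, this is $O(1/R)$. This alone does not make it negative, which is where the difficulty lies.

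\textbf{Step 3: producing $Q<0$.} Because $\phi_0$ has a node, the Sturm oscillation principle suggests a negative eigenvalue. Concretely, take the nodal piece $\psi=\phi_0\mathbf 1_{[0,\sqrt3]}$. This lies in the form domain and satisfies $Q[\psi]=0$, since the boundary term $\xi^2\phi_0\phi_0'$ vanishes at the node and at $0$. However, $\psi$ is not a $C^1$ solution across $\sqrt3$, so it cannot minimize $Q$ on the unit sphere. A small local perturbation near $\xi=\sqrt3$ (the standard corner-smoothing argument) therefore lowers $Q$ strictly below $0$.

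As a fallback, I would use a simple explicit trial such as $\phi=(1+\xi^2/3)^{-1}$ or $\theta_S^{k}$ for suitable $k$ and evaluate $Q$ in closed form with $\xi=\sqrt3\tan s$. Testing the profile equation against $\theta_S$ itself gives $Q[\theta_S]=-4\int\xi^2\theta_S^6<0$, but $\theta_S\notin\mathcal H$, so this still needs truncation. Truncating $\theta_S$ at radius $R$ costs only $O(1/R)$ in the gradient term, so the negative sign survives for large $R$.

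\textbf{Conclusion.} With $Q[\phi]<0$ for some $\phi$ in the form domain, the min-max principle for the Friedrichs extension $\overline L$ gives $\inf\sigma(\overline L)<0$. I expect Step 3, rigorously certifying negativity with a legitimate trial function rather than a formal one, to be the main obstacle. The truncated-$\theta_S$ route seems the cleanest, since its negative part $-4\int\xi^2\theta_S^6$ is finite and the truncation error decays.
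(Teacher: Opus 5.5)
Your proposal is correct. Step~1 is the paper's argument: multiply \eqref{eq:SL_normal} by $\phi$, integrate by parts, and drop $[\xi^2\phi\phi_\xi]_0^\infty$. The paper only asserts that this boundary term vanishes on $D_{\mathrm{core}}$, whereas you justify it.

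The real difference is how you certify $Q<0$. The paper uses Gaussians $e^{-(\xi/a)^2}$ with a numerically found window $0.92\lesssim a\lesssim 37$ (Remark~\ref{rem:negative_spectrum}), which is a numerical certificate rather than a proof. Both of your constructions are proofs.

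\emph{Nodal piece.} One line makes it airtight. Take $\eta\in C_c^\infty(0,\infty)$ supported near $\sqrt3$, integrate by parts on $(0,\sqrt3)$, and use $L\phi_0=0$. The polarized form is then $Q(\psi,\eta)=3\phi_0'(\sqrt3)\,\eta(\sqrt3)$, with $\phi_0'(\sqrt3)=-1/(2\sqrt6)\neq0$. Hence $Q[\psi+t\eta]=2tQ(\psi,\eta)+t^2Q[\eta]<0$ for small $t$ of suitable sign, with no appeal to minimizers.

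\emph{Truncated profile.} This route is cleaner still. Use $L\theta_S=-4\theta_S^5$ together with the localization identity $Q[f\chi]=\langle Lf,f\chi^2\rangle+\int\xi^2f^2(\chi')^2\,d\xi$. Since $\xi^2\theta_S^2\le3$, this gives $Q[\theta_S\chi_R]\le-4\int_0^R\xi^2\theta_S^6\,d\xi+3\|\chi'\|_{L^2}^2/R$, which is negative for large $R$. Moreover $\theta_S\chi_R\in C_c^\infty([0,\infty))$ with $(\theta_S\chi_R)'(0)=0$, so it is admissible.

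The paper's Gaussians give an elementary explicit family and a sense of which length scales feel the well. Your arguments give rigor, and the nodal version is the same Sturm-oscillation mechanism later used for uniqueness in Theorem~\ref{thm:marginal_spectral}.

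Two small slips:
\begin{enumerate}
\item The asymptotics are $\phi_0\sim-\tfrac{\sqrt3}{2}\xi^{-1}$, not $-\tfrac{3\sqrt3}{2}\xi^{-1}$.
\item $Q+\|\cdot\|_{\mathcal H}^2$ does not dominate $\int\xi^2\phi_\xi^2$, because $5\theta_S^4$ reaches $5$ at the origin; use $Q+6\|\cdot\|_{\mathcal H}^2$ instead. This defect comes from the paper's definition of $\|\cdot\|_H$, and it does not affect your argument.
\end{enumerate}
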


\begin{proof}
Multiply \eqref{eq:SL_normal} by $\phi$ and integrate on $(0,\infty)$. Using
\[
\int_0^\infty (p\phi')'\phi
=
\Bigl[p(\xi)\phi(\xi)\phi'(\xi)\Bigr]_{0}^{\infty}
-\int_0^\infty p(\xi)\phi'(\xi)^2\,d\xi
=-\int_0^\infty p(\xi)\phi'(\xi)^2\,d\xi
\]
for $\phi\in D_{\mathrm{core}}$. Multiplying \eqref{eq:SL_normal} by $\phi$ and integrating gives
\[
-\int_0^\infty (p\phi')'\phi\,d\xi
+\int_0^\infty q\phi^2\,d\xi
=\lambda\int_0^\infty w\phi^2\,d\xi,
\]
whence \eqref{eq:rayleigh} follows. The first numerator integral is nonnegative, but the second (with $q=-5\xi^2\theta_S^4<0$) is strictly negative, so the sign of $\lambda$ is not determined a priori; explicit trial functions yield $Q[\phi]<0$.
\end{proof}

\begin{remark}[Existence of negative spectrum]\label{rem:negative_spectrum}
Since $\sigma_{\mathrm{ess}}(\overline{L})=[0,\infty)$ (Theorem~\ref{thm:ess_spec}) while there exist normalized $\phi\in D_{\mathrm{core}}$ with $Q[\phi]<0$ (e.g., Gaussians $\phi=e^{-(\xi/a)^2}$, which satisfy $\phi'(0)=0$, with $0.92\lesssim a\lesssim 37$; for larger $a$ the kinetic term, which grows like $a$, dominates the bounded potential term), the min--max principle yields $\inf\sigma(\overline{L})<0$---at least one negative eigenvalue (bound state) below the essential spectrum. This is consistent with Sturm oscillation theory: the zero-energy solution $\phi_0$ of $L\phi_0=0$ has a single node at $\xi=\sqrt{3}$, so a nodeless ground state of strictly negative energy must lie below it. Theorem~\ref{thm:marginal_spectral}(ii) shows that there is exactly one; numerically (finite differences for $u=\xi\phi$ on $[0,400]$) $\lambda_0\approx-1.2104$, and the next eigenvalue tends to $0$ as the interval grows.
\end{remark}

\begin{proposition}[Symmetry and Friedrichs extension]\label{prop:symmetric_L}
The densely defined operator $L$ with domain $D_{\mathrm{core}}$ is symmetric on $\mathcal{H}$. Since the potential $-5\theta_S^4$ is bounded ($0<5\theta_S^4\le5$), the form $Q$ is bounded below by $-5\|\phi\|_{\mathcal{H}}^2$. The Friedrichs extension $\overline{L}$ is therefore self-adjoint with $\inf\sigma(\overline{L})\ge-5$ and form domain
\[
\mathcal{D}(Q)=\{\phi\in\mathcal{H}:\phi \text{ abs.\ cont.},\ \phi'\in\mathcal{H},\ Q[\phi]<\infty\}.
\]
\end{proposition}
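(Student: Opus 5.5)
The plan is to verify the three claims of Proposition~\ref{prop:symmetric_L} in order: symmetry on $D_{\mathrm{core}}$, semiboundedness of $Q$, and then the abstract Friedrichs construction with the stated form domain.

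\emph{Symmetry.} Take $\phi,\psi\in C_c^\infty([0,\infty))$ with $\phi'(0)=\psi'(0)=0$ and integrate by parts twice in $\langle L\phi,\psi\rangle$. The boundary term is $\bigl[\xi^2(\phi'\psi-\phi\psi')\bigr]_0^\infty$. It vanishes at infinity by compact support, and at $0$ because of the factor $\xi^2$ together with the boundedness of $\phi,\psi,\phi',\psi'$. This is the identity \eqref{eq:symmetry}. Density of $C_c^\infty$ in $\mathcal{H}$ shows that $L$ is densely defined. The form norm $\|\cdot\|_H$ dominates $\|\cdot\|_{\mathcal H}$, so the identity extends by continuity from the core to $D_{\mathrm{core}}$, as far as $L$ is defined there.

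\emph{Lower bound.} Since $0<\theta_S\le1$, we have $0<5\theta_S^4\le5$. Hence
\[
Q[\phi]\ge\int_0^\infty\phi_\xi^2\,\xi^2\,d\xi-5\|\phi\|_{\mathcal H}^2\ge-5\|\phi\|_{\mathcal H}^2 .
\]
As a consequence $Q+6\|\cdot\|^2_{\mathcal H}$ is a positive form. It is equivalent to $\int(\phi_\xi^2+\phi^2)\xi^2\,d\xi$, because the potential term is a bounded perturbation. This equivalence also shows that $\|\cdot\|_H^2$ as defined is comparable to that norm, once $\|\cdot\|_H$ is read with the shift $+6\|\cdot\|^2_{\mathcal{H}}$ so that it is positive.

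\emph{Friedrichs extension.} The form $Q$ is defined on the dense core, it is closable (it is a form associated with a semibounded symmetric operator), and it is bounded below by $-5$. The Friedrichs theorem (Kato, or Reed--Simon II, Thm.~X.23) then gives a unique self-adjoint $\overline L$ extending $L$. Its form domain is the closure of the core in the shifted form norm, and $\inf\sigma(\overline L)\ge-5$ because the lower bound is preserved under closure.

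\emph{Form domain (main obstacle).} The remaining step is to identify this closure with the set $\{\phi\in\mathcal H:\phi\ \text{abs.\ cont.},\ \phi'\in\mathcal H\}$, i.e. the radial $H^1(\mathbb R^3)$ functions. I would use the substitution $u=\xi\phi$ to pass to $H^1$ on the half-line, or argue directly in $\mathbb{R}^3$ using that $C_c^\infty(\mathbb R^3)$ radial functions are dense in radial $H^1$. The point at the origin is that no boundary condition is lost: the term $\xi^2$ kills the singular $\xi^{-1}$ branch, which has infinite weighted Dirichlet energy near $0$. So the closure automatically selects the regular solution, in agreement with Lemma~\ref{lem:SL_normal}. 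The condition $Q[\phi]<\infty$ is then redundant given $\phi'\in\mathcal H$.
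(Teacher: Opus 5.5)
Your proposal is correct and follows the paper's proof: symmetry by integration by parts on the smooth core, the lower bound $Q[\phi]\ge-5\|\phi\|_{\mathcal H}^2$ from $0<5\theta_S^4\le5$, and the abstract Friedrichs theorem for a closable semibounded form. You go somewhat beyond the paper in two places: you identify the form domain with radial $H^1(\mathbb{R}^3)$, which the paper only asserts, and you note that the completion norm must be shifted, e.g.\ to $Q+6\|\cdot\|_{\mathcal H}^2$, because $Q[\phi]+\|\phi\|_{\mathcal H}^2$ is not positive (it is negative near the ground state, since $\lambda_0\approx-1.21<-1$).
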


\begin{proof}
For $\phi,\psi\in D_{\mathrm{core}}$, integration by parts gives
\begin{equation}\label{eq:symmetric_identity}
\langle L\phi,\psi\rangle
=
\int_0^\infty
\left(
\phi_\xi \psi_\xi
-5\theta_S(\xi)^4 \phi\psi
\right)\xi^2 d\xi
=\tfrac12\bigl(Q[\phi+\psi]-Q[\phi]-Q[\psi]\bigr),
\end{equation}
confirming symmetry. The Friedrichs extension exists because $Q$ is closable and bounded below ($Q[\phi]\ge-5\|\phi\|_{\mathcal{H}}^2$ since $0<5\theta_S^4\le5$) on $D_{\mathrm{core}}$; see \cite[Sec.~VI.3]{ReedSimon4}.
\end{proof}

Thus the linear stability problem reduces to spectral analysis of $L$:
\begin{equation}
L\phi=\lambda\phi.
\end{equation}
\subsection{Essential Spectrum}
As $\xi\to\infty$,
\[
L\sim -\Delta_r,
\]
the radial Laplacian in $\mathbb{R}^3$.

\begin{theorem}[Essential spectrum]\label{thm:ess_spec}
For the Friedrichs extension $\overline{L}$,
\[
\sigma_{\mathrm{ess}}(\overline{L})=[0,\infty).
\]
\end{theorem}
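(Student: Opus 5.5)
The plan is to remove the weight by the unitary map $U:\mathcal{H}\to L^2((0,\infty),d\xi)$, $(U\phi)(\xi)=\xi\phi(\xi)$. It conjugates $\overline{L}$ to the half-line Schr\"odinger operator
\[
H=-\frac{d^2}{d\xi^2}+V(\xi),\qquad V(\xi)=-5\theta_S(\xi)^4=-5\Bigl(1+\frac{\xi^2}{3}\Bigr)^{-2},
\]
with the Dirichlet condition $u(0)=0$. This boundary condition is the one selected by the Friedrichs extension, as in Lemma~\ref{lem:SL_normal}: $\phi$ bounded forces $u=\xi\phi\to0$. To justify the identification, we check that $U$ maps $D_{\mathrm{core}}$ onto a form core of the Dirichlet form $\int_0^\infty(u'^2+Vu^2)\,d\xi$. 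The identity behind this is
\[
\int_0^\infty \phi'^2\xi^2\,d\xi=\int_0^\infty u'^2\,d\xi,
\]
valid for $u=\xi\phi$ with $\phi$ smooth up to $0$. It follows by integrating by parts, since the boundary term $[\,u^2/\xi\,]$ vanishes at $0$ and at $\infty$. Because Friedrichs extensions are determined by their closed forms, $\overline{L}$ is unitarily equivalent to $H_D+V$, where $H_D=-d^2/d\xi^2$ with Dirichlet condition on $(0,\infty)$.

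Next, I would use $\sigma(H_D)=\sigma_{\mathrm{ess}}(H_D)=[0,\infty)$, which can be seen for instance via the odd extension to $\mathbb{R}$ and the sine transform. Then I would show that multiplication by $V$ is relatively compact with respect to $H_D$, which gives $\sigma_{\mathrm{ess}}(H_D+V)=[0,\infty)$ by Weyl's theorem \cite[Sec.~XIII.4]{ReedSimon4}. For relative compactness, write $V=V\chi_{[0,R]}+V\chi_{(R,\infty)}$. Since $|V|\le 45\xi^{-4}$, we have $\|V\chi_{(R,\infty)}\|_\infty\to0$ as $R\to\infty$. It therefore suffices that $V\chi_{[0,R]}(H_D+1)^{-1}$ is compact, since compact operators form a norm-closed set. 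This holds because $(H_D+1)^{-1}$ maps $L^2$ boundedly into $H^2\cap H^1_0$, and by Rellich the restriction to $[0,R]$ of a bounded set in $H^1$ is precompact in $L^2(0,R)$. An alternative route is that $V\in L^1\cap L^\infty$, so $|V|^{1/2}(H_D+1)^{-1}|V|^{1/2}$ is Hilbert--Schmidt, using the explicit Dirichlet Green's function $G(x,y)=\tfrac12(e^{-|x-y|}-e^{-(x+y)})$.

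A direct inclusion $[0,\infty)\subset\sigma(\overline{L})$ could be given instead, as a sanity check. Take Weyl sequences $u_k=\chi_k(\xi)\sin(\sqrt{E}\,\xi)$, where $\chi_k$ are smooth cutoffs supported in $[k,2k+k^2]$ with spreading plateaus. Then $\|(H-E)u_k\|/\|u_k\|\to0$, because the derivatives of the cutoffs are small and $V\to0$ on the support. Pulling back via $\phi_k=u_k/\xi$ gives the corresponding sequence for $\overline{L}$. The reverse inclusion $\sigma_{\mathrm{ess}}\subset[0,\infty)$ still requires the compactness argument above.

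The main obstacle is the careful identification of the Friedrichs extension of $L$ on $D_{\mathrm{core}}$ with the Dirichlet realization of $H$. The issue is that $\xi=0$ is limit-circle, so another self-adjoint extension, corresponding to a Robin-type condition in $u$, could in principle arise. I would handle this through the form identity above, showing that the closure of $U D_{\mathrm{core}}$ in the form norm is exactly $H^1_0(0,\infty)$. The two inclusions are as follows. First, every $u\in C_c^\infty(0,\infty)$ equals $\xi\phi$ with $\phi\in C_c^\infty$ vanishing near $0$. Second, every $\xi\phi$ with $\phi\in D_{\mathrm{core}}$ vanishes at $0$ and lies in $H^1$. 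Once this identification is in place, the remaining steps are standard perturbation theory.
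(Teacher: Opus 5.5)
Your proposal is correct and takes the same route as the paper. Both apply Weyl's theorem to the bounded potential $V=-5\theta_S^4$, which vanishes at infinity and is therefore a relatively compact perturbation of the free radial operator, whose essential spectrum is $[0,\infty)$. Your version is more complete in two places:
\begin{itemize}
\item You justify identifying the Friedrichs extension with the Dirichlet realization of $-d^2/d\xi^2+V$ via $u=\xi\phi$; the paper only invokes this reduction later, in the proof of Theorem~\ref{thm:marginal_spectral}.
\item You actually prove relative compactness by the cutoff-plus-Rellich argument, where the paper merely asserts it.
\end{itemize}
One adjustment: take the form norm to be $Q[\phi]+c\|\phi\|_{\mathcal{H}}^2$ with $c>5$. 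The paper's choice $Q[\phi]+\|\phi\|_{\mathcal{H}}^2$ is not positive definite, since $\lambda_0<-1$. With $c>5$ the form norm is equivalent to the $H^1$ norm of $u$, which is what your closure argument uses.
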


\begin{proof}
Write $\overline{L}=-\frac{1}{\xi^2}\frac{d}{d\xi}\left(\xi^2\frac{d}{d\xi}\right)+V(\xi)$ with $V(\xi)=-5\theta_S(\xi)^4=O(\xi^{-4})$ as $\xi\to\infty$. The potential $V$ is bounded and continuous on $(0,\infty)$ and vanishes at infinity, so $V$ is relatively compact with respect to the free operator $L_0=-\xi^{-2}(\xi^2\phi')'$: the resolvent $(L_0+1)^{-1}$ maps into decaying functions, and multiplication by a potential vanishing at infinity composed with this resolvent is compact. By Weyl's theorem on the invariance of the essential spectrum under relatively compact perturbations, $\sigma_{\mathrm{ess}}(\overline{L})$ equals that of $L_0$ on the radial domain with regularity at the origin. For $L_0$, the essential spectrum is $[0,\infty)$ (limit-point case at $\xi=\infty$ with plane-wave asymptotics $\phi\sim e^{\pm i\sqrt{\lambda}\,\xi}/\xi$ for $\lambda\ge0$).
\end{proof}

Thus $\lambda=0$ lies at the threshold of the essential spectrum, consistent with the neutral, non-exponential character of the scaling mode.
\subsection{Regularity and Boundary Conditions}
Physically admissible perturbations must be regular at the origin. A Frobenius expansion of \eqref{eq:sturm_liouville_form} near $\xi=0$ yields
\begin{equation}
\phi(\xi)=\phi(0)+O(\xi^2),\qquad \phi_\xi(0)=0,
\end{equation}
so regular perturbations are uniquely determined by their central value.

For large $\xi$, the coefficient satisfies
\begin{equation}
5\left(1+\frac{\xi^2}{3}\right)^{-2}
\sim 45\,\xi^{-4},\qquad \xi\to\infty,
\end{equation}
and \eqref{eq:sturm_liouville_form} asymptotically reduces to
\begin{equation}\label{eq:asymptotic_linearized}
\phi_{\xi\xi}+\frac{2}{\xi}\phi_\xi+O(\xi^{-4})\phi=0.
\end{equation}
The leading-order solutions are
\begin{equation}\label{eq:asymptotic_solutions}
\phi(\xi)\sim C_1+\frac{C_2}{\xi},\qquad \xi\to\infty.
\end{equation}
Generic linear perturbations therefore do not decay faster than $\xi^{-1}$; a non-decaying component persists unless excluded by boundary conditions. This slow decay reflects the infinite extent of the $n=5$ polytrope.
\subsection{Neutral Mode and Scaling Invariance}
The $n=5$ Lane--Emden equation possesses a continuous scaling symmetry: if $\theta(\xi)$ is a solution, then so is
\begin{equation}\label{eq:scaling_symmetry}
\theta_\lambda(\xi)=\lambda^{1/2}\theta(\lambda\xi),
\qquad \lambda>0.
\end{equation}
Differentiating \eqref{eq:scaling_symmetry} with respect to $\lambda$ at $\lambda=1$ produces a distinguished perturbation mode,
\begin{equation}\label{eq:neutral_scaling_mode}
\phi_0(\xi)
=\left.\frac{d}{d\lambda}\theta_\lambda(\xi)\right|_{\lambda=1}
=\frac{1}{2}\theta_S(\xi)+\xi\,\theta_S'(\xi).
\end{equation}
\begin{lemma}[Neutral mode]\label{lem:neutral_mode}
The function $\phi_0$ in \eqref{eq:neutral_scaling_mode} satisfies $L\phi_0=0$ on $(0,\infty)$. As $\xi\to\infty$, $\phi_0(\xi)\sim C/\xi$ for some $C\neq0$, so $\phi_0\notin\mathcal{H}$. The quadratic form $Q[\phi_0]$ is finite (the integrand in \eqref{eq:quadratic_form} decays like $O(\xi^{-2})$).
\end{lemma}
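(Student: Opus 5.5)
The argument is a direct computation, with the scaling symmetry of Proposition~\ref{prop:scaling} used as a cross-check. First I would establish $L\phi_0=0$ from the symmetry itself. For every $\lambda>0$, $\theta_\lambda(\xi)=\lambda^{1/2}\theta_S(\lambda\xi)$ solves \eqref{eq:lane_emden_n5} identically in $\xi>0$, and the map $\lambda\mapsto\theta_\lambda(\xi)$ is smooth jointly in $(\lambda,\xi)$. Differentiating the identity $\xi^{-2}(\xi^2\theta_\lambda')'+\theta_\lambda^5=0$ in $\lambda$ at $\lambda=1$, and commuting the $\lambda$- and $\xi$-derivatives, gives $\xi^{-2}(\xi^2\phi_0')'+5\theta_S^4\phi_0=0$, i.e.\ $L\phi_0=0$.

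To obtain the explicit form, I would substitute $\theta_S=(1+\xi^2/3)^{-1/2}$, which gives $\xi\theta_S'=-\tfrac{\xi^2}{3}(1+\xi^2/3)^{-3/2}$. Hence
\[
\phi_0(\xi)=\frac{1-\xi^2/3}{2\,(1+\xi^2/3)^{3/2}}.
\]
As a consistency check, plugging this expression directly into \eqref{eq:sturm_liouville_form} should reproduce $L\phi_0=0$. The formula also shows that $\phi_0$ has its single node at $\xi=\sqrt3$, as claimed in Remark~\ref{rem:negative_spectrum}.

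Next come the asymptotics. For large $\xi$,
\[
\phi_0\sim\frac{-\xi^2/3}{2(\xi^2/3)^{3/2}}=-\frac{\sqrt3}{2\xi},
\]
so $C=-\sqrt3/2\neq0$. Since $\phi_0^2\xi^2\to 3/4$, the integral $\int^\infty\phi_0^2\xi^2\,d\xi$ diverges linearly, and therefore $\phi_0\notin\mathcal H$. Near the origin $\phi_0$ is smooth with $\phi_0(0)=\tfrac12$, so the only obstruction to square-integrability is at infinity.

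For the finiteness of $Q[\phi_0]$, I would estimate the two terms of the integrand separately. Differentiating the closed form, $\phi_0'\sim\frac{\sqrt3}{2}\xi^{-2}$, so $\phi_0'^2\xi^2=O(\xi^{-2})$. Also $5\theta_S^4\phi_0^2\xi^2=O(\xi^{-4})\cdot O(1)=O(\xi^{-4})$. The integrand is therefore $O(\xi^{-2})$ at infinity and bounded near $0$, so $Q[\phi_0]$ converges absolutely.

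The only step needing care is justifying the asymptotic derivative, i.e.\ making sure no cancellation spoils the $O(\xi^{-2})$ bound. The closed form settles this. Alternatively, the Wronskian identity of the Sturm--Liouville section can be used: $\xi^2\phi_0'$ tends to a constant, consistent with \eqref{eq:asymptotic_solutions} with $C_1=0$ and $C_2=-\sqrt3/2$.

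Finally, integrating by parts with the boundary term $\xi^2\phi_0\phi_0'\to -\tfrac{\sqrt3}{2}\cdot\tfrac{\sqrt3}{2}=-\tfrac34$ gives $Q[\phi_0]=-\tfrac34\neq0$. This value is not needed for the lemma, but it shows that $\phi_0$ is a threshold resonance rather than a form-domain zero mode.
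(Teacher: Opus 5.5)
Your proof of the lemma is correct and follows the paper's own route: differentiate the scaling family at $\lambda=1$ to get $L\phi_0=0$, use $\phi_0\sim-\frac{\sqrt3}{2}\xi^{-1}$ to show $\phi_0\notin\mathcal H$, and bound the two terms of the integrand of $Q$ by $O(\xi^{-2})$ and $O(\xi^{-4})$; your closed form $\phi_0=(1-\xi^2/3)/\bigl(2(1+\xi^2/3)^{3/2}\bigr)$ and the estimate $\phi_0'\sim\frac{\sqrt3}{2}\xi^{-2}$ simply make these steps explicit. Your closing aside is wrong, however: the boundary term is $\xi^2\phi_0\phi_0'=(\xi^2\phi_0')\,\phi_0\sim-\frac34\,\xi^{-1}\to0$ (you multiplied the limits of $\xi^2\phi_0'$ and $\xi\phi_0$, which gives the limit of $\xi^3\phi_0\phi_0'$), so in fact $Q[\phi_0]=0$, and the reason $\phi_0$ is a threshold resonance rather than an $L^2$ zero mode is $\phi_0\notin\mathcal H$, not $Q[\phi_0]\neq0$.
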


\begin{proof}
Differentiating \eqref{eq:scaling_symmetry} at $\lambda=1$ gives $\phi_0=\tfrac12\theta_S+\xi\theta_S'$. Since $\theta_\lambda$ solves \eqref{eq:lane_emden_n5} for each $\lambda$, differentiating the equation with respect to $\lambda$ at $\lambda=1$ yields $L\phi_0=0$. With $\theta_S(\xi)\sim\sqrt{3}\,\xi^{-1}$, one has $\phi_0(\xi)\sim-\tfrac{\sqrt{3}}{2}\xi^{-1}$, hence $\|\phi_0\|_{\mathcal{H}}^2=\int_0^\infty \xi^2\phi_0^2\,d\xi$ diverges. For $Q[\phi_0]$, the term $\xi^2\phi_\xi^2$ is $O(\xi^{-2})$ and $5\xi^2\theta_S^4\phi_0^2$ is $O(\xi^{-4})$, so the integral converges.
\end{proof}

The mode $\phi_0$ neither grows nor decays exponentially; it represents infinitesimal rescalings of the background profile. Since $\phi_0\notin\mathcal{H}$, it is not an eigenfunction of $\overline{L}$ in the Hilbert-space sense; rather, $0$ is a threshold of $\sigma_{\mathrm{ess}}(\overline{L})$ (Theorem~\ref{thm:ess_spec}; see Figure~\ref{fig:linear_operator}).

\begin{remark}[Neutral mode and boundary data]\label{rem:phi0_bc}
The neutral mode satisfies $\phi_0(0)=\tfrac12\theta_S(0)=\tfrac12$, so it is not an admissible perturbation under homogeneous central data $\phi(0)=0$. Rather, $\phi_0$ is the tangent vector to the scaling orbit \eqref{eq:scaling_symmetry} at $\lambda=1$: it lies in the kernel of the formal operator $L$ and encodes neutral (non-exponential) linear motion along the similarity direction. Physical perturbations with prescribed $\phi(0)$ and $\phi'(0)=0$ are analyzed in the Friedrichs framework of Section~\ref{sec:linear_stability}; Theorem~\ref{thm:marginal_spectral} shows that $0$ is the bottom of the essential spectrum but that a negative eigenvalue exists below it.
\end{remark}

\begin{figure}[htbp]
\centering
\includegraphics[width=0.72\linewidth]{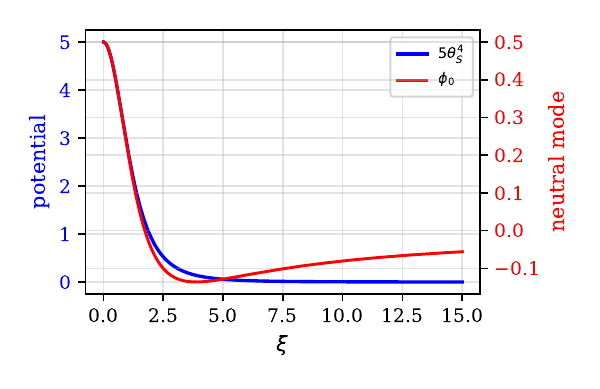}\\[0.6em]
\caption{Linearized potential $5\theta_S(\xi)^4$ (blue, left axis) and neutral mode $\phi_0=\tfrac12\theta_S+\xi\theta_S'$ (red, right axis) associated with scaling symmetry at $n=5$.}
\label{fig:linear_operator}
\end{figure}
\subsection{Interpretation and Connection to Nonlinear Results}
The linear analysis leads to the following conclusions:
\begin{enumerate}
\item [i)] The linearized operator about the Schuster solution is a Schr\"odinger operator with an attractive potential well. It possesses a negative-energy bound state, so the second-variation form $Q$ is indefinite---the Schuster profile is a saddle point of the second-variation energy, with a direction along which the energy strictly decreases, and is therefore spectrally unstable.
\item [ii)] The formal neutral mode \eqref{eq:neutral_scaling_mode} sits at the threshold $0$ of the essential spectrum and encodes the neutral scaling direction; it is not square-integrable and does not by itself confer stability.
\item [iii)] Generic perturbations exhibit slow algebraic behavior at large radius, consistent with the infinite extent of the $n=5$ polytrope.
\end{enumerate}

These linear results are consistent with the instability of the distance-varying model of Section~\ref{sec:stability} and with the saddle structure of Theorem~\ref{thm:saddle_nu}: perturbations that leave the separatrix in the $(\nu,\dot\nu)$-plane lie on periodic orbits and give solutions that differ qualitatively from $\theta_S$ at large radius. Together, these results identify the Schuster profile as a critical, spectrally unstable configuration separating distinct regimes in polytropic stellar structure theory.

\begin{theorem}[Spectral structure of the linearized operator]\label{thm:marginal_spectral}
Let the Friedrichs extension $\overline{L}$ be as in Proposition~\ref{prop:symmetric_L}. Then:
\begin{enumerate}
\item [i)] $\sigma_{\mathrm{ess}}(\overline{L})=[0,\infty)$, so $0$ is the bottom of the essential spectrum; the scaling generator $\phi_0$ in \eqref{eq:neutral_scaling_mode} satisfies $L\phi_0=0$ but $\phi_0\notin\mathcal{H}$ (Lemma~\ref{lem:neutral_mode}), so $0$ is a threshold resonance rather than an $L^2$ eigenvalue;
\item [ii)] $\overline{L}$ has exactly one negative eigenvalue $\lambda_0$ below the essential spectrum (numerically $\lambda_0\approx-1.2104$, Remark~\ref{rem:negative_spectrum}); moreover $\overline{L}$ is bounded below, $\inf\sigma(\overline{L})\ge-5$.
\end{enumerate}
Hence the Schuster solution is \emph{spectrally unstable} at the linear level: the linearized operator $L=-\Delta_r-5\theta_S^4$ is a Schr\"odinger operator with an attractive potential well admitting a negative-energy bound state. This linear spectral instability is consistent with the nonlinear Lyapunov instability of Section~\ref{sec:stability} (Theorems~\ref{thm:lin_n5_instab} and~\ref{unstable2_n5}) and with the hyperbolic-saddle structure of Theorem~\ref{thm:saddle_nu}.
\end{theorem}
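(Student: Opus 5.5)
Part (i) needs no new argument. It combines Theorem~\ref{thm:ess_spec} with Lemma~\ref{lem:neutral_mode}. Since $\phi_0\sim-\tfrac{\sqrt3}{2}\xi^{-1}$, it is bounded, regular at $\xi=0$ and solves $L\phi_0=0$, but it fails to lie in $\mathcal{H}$. This is precisely a zero-energy threshold resonance. The lower bound $\inf\sigma(\overline{L})\ge-5$ is Proposition~\ref{prop:symmetric_L}. The substantive work is the count of negative eigenvalues in (ii).

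\textbf{Reduction.} I would pass to $u=\xi\phi$, which maps $\mathcal{H}$ unitarily onto $L^2(0,\infty)$. The Friedrichs condition (regular $\phi$) becomes $u(0)=0$, and $\overline{L}$ becomes the half-line Schrödinger operator $H=-\frac{d^2}{d\xi^2}+V$ with Dirichlet condition at $0$ and $V=-5\theta_S^4=-5(1+\xi^2/3)^{-2}$. Since $V$ is bounded and decays like $\xi^{-4}$, one has $\int_0^\infty\xi|V|\,d\xi<\infty$. Hence the negative spectrum of $H$ consists of finitely many simple eigenvalues, which lie below $\sigma_{\mathrm{ess}}=[0,\infty)$.

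\textbf{Existence.} At least one negative eigenvalue follows from Remark~\ref{rem:negative_spectrum}: a Gaussian trial function with $Q<0$ together with the min--max principle gives $\inf\sigma(\overline{L})<0$, and that infimum must be an eigenvalue because it lies below the essential spectrum.

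\textbf{Upper bound (main step).} Here I would use Sturm oscillation theory at the threshold energy $\lambda=0$, in the half-line form also valid for zero-energy resonances. The number of negative eigenvalues of $H$ equals the number of zeros in $(0,\infty)$ of the solution $u_0$ of $Hu=0$ with $u_0(0)=0$, $u_0'(0)=1$. By uniqueness of the regular solution of \eqref{eq:sturm_liouville_form}, $u_0$ is a constant multiple of $\xi\phi_0$. It therefore suffices to count the zeros of $\phi_0$. A direct computation from $\theta_S'=-\tfrac{\xi}{3}(1+\xi^2/3)^{-3/2}$ gives
\[
\phi_0=\tfrac12(1+\xi^2/3)^{-3/2}\bigl(1-\xi^2/3\bigr),
\]
whose only zero on $(0,\infty)$ is $\xi=\sqrt3$. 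Hence $u_0$ has exactly one interior zero, and $H$ has exactly one negative eigenvalue.

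The obstacle is to justify the oscillation count at the edge of the essential spectrum. There $u_0\to$ const$\cdot\xi^{0}$, i.e.\ $u_0\sim -\tfrac{\sqrt3}{2}$, a resonance rather than an eigenfunction. I would handle this by truncation. Restrict to $[0,R]$ with Dirichlet conditions; by classical Sturm theory the number of negative Dirichlet eigenvalues there equals the number of zeros of $u_0$ in $(0,R)$, which is $1$ for all $R>\sqrt3$. Since $u_0(R)\neq0$ for large $R$, none of these eigenvalues is $0$. Dirichlet--Neumann bracketing, or monotone convergence of the truncated forms to $Q$, then shows that the negative eigenvalues of $H$ are limits of the truncated ones. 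The count is therefore at most one, and the existence step above gives exactly one. The numerical value $\lambda_0\approx-1.2104$ is quoted from the finite-difference computation of Remark~\ref{rem:negative_spectrum}. The concluding spectral-instability statement is then immediate, because an $L^2$ eigenfunction with $Q<0$ makes the second variation of Proposition~\ref{prop:euler_lagrange} indefinite.
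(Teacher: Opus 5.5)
Your proof is correct. It uses the same key ingredient as the paper: the explicit zero-energy solution $u_*=\xi\phi_0$, which vanishes at $\xi=0$ and has a single interior zero at $\xi=\sqrt3$. You turn that zero count into an eigenvalue bound by a different route, though.

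The paper argues by contradiction directly on the half-line. If there were two negative eigenvalues $\lambda_0<\lambda_1<0$, the eigenfunction $u_1$ would have exactly one zero $x_1$. Since $u_1,u_1'$ decay exponentially while $u_*,u_*'$ stay bounded, $W(u_1,u_*)$ vanishes at both $0$ and $\infty$. Sturm comparison on $(0,x_1)$ and on $(x_1,\infty)$ then forces $u_*$ to vanish in each interval, i.e.\ twice. This needs no limiting procedure, but it relies on the half-line oscillation theorem for $u_1$ and on the decay at infinity.

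Your truncation route needs only the classical oscillation theorem for the regular Dirichlet problem on $[0,R]$ plus convergence of min--max values. It also gives something extra. By domain monotonicity, $\mu_0(H)\le\mu_0(H_R^D)<0$ for $R>\sqrt3$, so the truncated ground state is itself a negative-energy trial function. The existence half of (ii) therefore no longer relies on the Gaussian computation of Remark~\ref{rem:negative_spectrum}.

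One correction to your last step: the upper bound does not come from Dirichlet--Neumann bracketing.
\begin{itemize}
\item Dirichlet truncation gives $\mu_k(H)\le\mu_k(H_R^D)$, which only yields $N(H)\ge N(H_R^D)$.
\item Neumann decoupling at $R$ yields $N(H)\le 1+N(H^N_{[R,\infty)})$. The Neumann tail carries a bound state because $V<0$ there, so this gives only $N(H)\le 2$.
\end{itemize}
What actually works is the other option you name. Compactly supported functions form a form core, so $\mu_1(H_R^D)\downarrow\mu_1(H)$ as $R\to\infty$. Since $\mu_1(H_R^D)>0$ for every $R>\sqrt3$ (exactly one Dirichlet eigenvalue below $0$, and $u_0(R)\neq0$), you get $\mu_1(H)\ge0$, hence at most one negative eigenvalue. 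State the argument in that form and drop the bracketing alternative.
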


\begin{proof}
Part (i) combines Lemma~\ref{lem:neutral_mode} with Theorem~\ref{thm:ess_spec}: the threshold value $0$ carries the formal null-mode $\phi_0$, which fails to be square-integrable with weight $\xi^2$. For part (ii), existence is the content of Remark~\ref{rem:negative_spectrum}: there exist normalized $\phi\in D_{\mathrm{core}}$ with $Q[\phi]<0$, so by the min--max principle the bottom of the spectrum lies strictly below the essential spectrum at $0$. For uniqueness, set $u=\xi\phi$; then $\overline{L}$ is unitarily equivalent to the Dirichlet operator $H=-d^2/d\xi^2-5\theta_S^4$ on $L^2(0,\infty)$ (the Friedrichs extension selects $u(0)=0$, Lemma~\ref{lem:SL_normal}). The zero-energy solution $u_*=\xi\phi_0=\sqrt3\,\xi(3-\xi^2)/\bigl(2(\xi^2+3)^{3/2}\bigr)$ satisfies $u_*(0)=0$, vanishes in $(0,\infty)$ only at $\xi=\sqrt3$, and $u_*\to-\sqrt3/2$, $u_*'\to0$ as $\xi\to\infty$. Suppose $H$ had two negative eigenvalues $\lambda_0<\lambda_1<0$. By the oscillation theorem the eigenfunction $u_1$ of $\lambda_1$ has exactly one zero $x_1\in(0,\infty)$ \cite[Sec.~9.7]{Teschl2009}, and $u_1,u_1'$ decay exponentially, so the Wronskian $W(u_1,u_*)$ vanishes at $\xi=0$ and as $\xi\to\infty$. Since $-5\theta_S^4-\lambda_1>-5\theta_S^4$, Sturm's comparison theorem \cite[Sec.~9.7]{Teschl2009} applied on $(0,x_1)$ and on $(x_1,\infty)$ forces $u_*$ to vanish in each interval, i.e.\ at least twice in $(0,\infty)$, a contradiction. Hence $\lambda_0$ is the only negative eigenvalue; the lower bound $\inf\sigma(\overline{L})\ge-5$ follows from $5\theta_S^4\le5$ (Proposition~\ref{prop:symmetric_L}).
\end{proof}

\begin{remark}
The Sturm--Liouville normal form $-(p\phi')'+q\phi=\lambda w\phi$ with $p,w>0$ and $q=-5\xi^2\theta_S^4<0$ on $(0,\infty)$, with a limit-point endpoint at $\xi=\infty$, is treated systematically in \cite{Zettl}. The negative sign of $q$ (an attractive potential) is precisely what permits the negative eigenvalue of Theorem~\ref{thm:marginal_spectral}(ii). The profile $\theta_S$ is the ground state $W(x)=(1+|x|^2/3)^{-1/2}$ of $-\Delta W=W^5$ in $\mathbb{R}^3$, and the spectral structure in Theorem~\ref{thm:marginal_spectral} (one negative eigenvalue and the threshold resonance generated by scaling) is part of the spectral analysis of the linearized operator about $W$ in the energy-critical Schr\"odinger theory~\cite{DuyckaertsMerle2009}; the contribution here is its reading in the polytrope setting together with the explicit numerical value $\lambda_0\approx-1.21$.
\end{remark}

\section{Future Directions}
\label{sec:future}
Several extensions of this work suggest themselves, some mathematical and some astrophysical.
\subsection{Extension to Other Polytropic Indices}
We concentrated on $n=5$; the obvious next step is to carry the nonautonomous Lyapunov framework to other indices. Particular indices model distinct physical regimes: $n=3/2$ ($\gamma=5/3$) describes fully convective stars and non-relativistic degenerate electron gas (low-mass white dwarfs), and $n=3$ ($\gamma=4/3$) describes Eddington's standard model of stars in radiative equilibrium and the ultra-relativistic degenerate gas behind the Chandrasekhar mass~\cite{Chan}. Comparing stability across indices should show how the threshold between finite-radius and infinite-radius behavior arises. It would also be interesting to know for which other $n$ Lyapunov functions with a similar exact cancellation exist.
\subsection{Relativistic Generalizations}
At high compactness Newtonian gravity fails and general relativity must take over. The Tolman--Oppenheimer--Volkoff equations play the role of the Lane--Emden equation for relativistic spherically symmetric equilibria, and adapting our stability arguments to them would address relativistic polytropes and neutron-star models. One would learn how relativistic corrections shift the threshold behavior found here.
Polytropic and Lane--Emden-type models have already been extended to relativistic and modified-gravity settings. B\"ohmer and Harko~\cite{BohmerHarko2010}, for example, considered generalized stellar configurations in alternative gravity theories, where stability criteria can change in the strong-field regime. Analytical and semi-analytical studies of generalized Lane--Emden equations also continue to appear (see, e.g., \cite{AMR2018}).
\subsection{Higher-Order and Nonlinear Perturbation Theory}
The linear analysis gives the Schuster solution one negative eigenvalue and a threshold neutral mode coming from scaling invariance. The natural follow-up is higher-order and fully nonlinear perturbation theory: the basins of attraction of the stable equilibria, nonlinear saturation, and whether a small perturbation leads to collapse, dispersion or convergence to a nearby equilibrium.
\subsection{Global Phase-Space and Numerical Investigations}
Global phase-space structure is hard to reach analytically. High-precision integration of the Lane--Emden equation and its perturbations could map invariant manifolds, separatrices and transition trajectories between solution branches, test how sharp our stability criteria are, and look for periodic or quasi-periodic behavior beyond the known integrable cases.

Spectral and rational approximation schemes for Lane--Emden-type equations (see \cite{Parand,Wazwaz,Ramos}) work well on semi-infinite domains and could be applied to this task.
\subsection{Applications to Modern Astrophysical Models}
The same framework could be applied to compact stellar remnants, to gaseous halos and molecular clouds, and to equilibrium configurations in numerical simulations of star formation, where one needs stability criteria for self-gravitating systems that can be checked against computed models.

Physics-informed neural networks and hybrid spectral--learning methods have also been proposed for Lane--Emden-type equations on unbounded domains (see \cite{Youssri2023,Baty2023,Mirzabeigi2025}). These give accurate numerical solutions, but they do not by themselves characterize the stability of the underlying continuous models, which is what the analysis here addresses.

Treatments of polytropic stellar models and their physical interpretation, including the role of $n=5$ as a threshold case, are given in \cite{Polytropes}.

\section{Conclusion}
\label{sec:conclusion}
We have studied the solutions and the stability of the Lane--Emden equation of index $n=5$, the critical case of the polytropic models. Classical transformations, Weierstrass elliptic-function representations~\cite{GoennerHavas2000,Mach2012,MancasRosu2018,Lawden}, the variational structure, spectral analysis and nonautonomous Lyapunov methods together give both the global solution structure and the stability properties of this critical configuration, which turns out to be unstable.
\subsection{Synthesis of Analytical and Dynamical Results}
The similarity family \eqref{eq:scaling_family} exists for every $n$, but at $n=5$ the exponent $2/(n-1)=1/2$ is critical for infinite-radius polytropes, and the infinitesimal generator \eqref{eq:phi0_intro} produces a neutral spectral mode about the Schuster profile. The dynamical results of this paper all trace back to this structure.

We began with the analytic structure. The Schuster transformation and a logarithmic rescaling of the radius reduce the equation to an autonomous second-order system with a first integral. This recovers the classical Schuster ($\nu_0=0$) and Srivastava ($\nu_0=\tfrac{1}{12}$) solutions~\cite{Sch,Srivastava,Chan} and places them in a classification by energy level. Section~\ref{sec:weierstrass} showed that generic bounded orbits are expressed through Weierstrass elliptic functions~\cite{GoennerHavas2000,Mach2012,Olver}, while the Schuster and Srivastava profiles are distinguished elementary cases; the methods of~\cite{MancasRosu2018,MancasBBM2013} place this reduction within the wider integrable Emden--Fowler literature. In light of Mach's complete classification~\cite{Mach2012}, these classical configurations emerge as special members of a richer elliptic-function family, with the Schuster solution playing the role of a separatrix between positive solutions and solutions with infinitely many zeros (Figure~\ref{fig:weierstrass_family}).

For the distance-varying Emden--Fowler system, whose even/odd dichotomy (stability for even $n$, instability for odd $n$) was established in~\cite{AMR2018}, we proved directly that at $n=5$ the unique real equilibrium is unstable, both at the linearized level and in the fully nonlinear regime.
\subsection{Spectral and Variational Interpretation}
The linearized problem about the Schuster solution defines a self-adjoint operator $L$ on the weighted Hilbert space $L^2((0,\infty),\xi^2\,d\xi)$, so linear stability becomes a question about the spectrum of $L$. We showed that:
\begin{enumerate}
\item [i)] the scaling generator $\phi_0$ is a formal null-mode ($L\phi_0=0$) but not in $\mathcal{H}$;
\item [ii)] the essential spectrum satisfies $\sigma_{\mathrm{ess}}(\overline{L})=[0,\infty)$;
\item [iii)] the attractive potential $-5\theta_S^4$ produces a negative eigenvalue (bound state) below the essential spectrum, so $\inf\sigma(\overline{L})<0$; generic perturbations also exhibit slow algebraic behavior at infinity.
\end{enumerate}

From a variational perspective, the linearized operator coincides with the second variation of the associated energy functional. The threshold scaling mode reflects the degeneracy of the Hessian along the similarity direction, while the negative eigenvalue shows that the second variation is \emph{not} positive---there exists a direction in which the energy strictly decreases. The Schuster solution is therefore spectrally \emph{unstable}.

This is consistent with the Lyapunov instability of the distance-varying model and with the saddle at $\nu=0$ (Theorem~\ref{thm:saddle_nu}), off whose separatrix small perturbations produce solutions with qualitatively different large-radius behavior.
\subsection{Criticality of the \texorpdfstring{$n=5$}{n=5} Polytrope}
The index $n=5$ marks the transition between finite-radius and infinite-radius polytropes. For $0\le n<5$ the regular solution of \eqref{LE}--\eqref{BC} has a first zero $\xi_1<\infty$, so the polytrope has a finite radius; for $n>5$ it stays positive, decays like $\xi^{-2/(n-1)}$, more slowly than the $\xi^{-1}$ decay at $n=5$, and its mass diverges~\cite[Ch.~IV]{Chan}. The critical similarity exponent at $n=5$ and the associated neutral mode identify this index as a threshold case analogous to critical exponents in nonlinear elliptic and geometric PDEs.

Within this framework, the Schuster solution appears as a critical configuration: up to scaling and sign it is the only solution regular at the center, and it forms the boundary between the oscillating Weierstrass and Jacobian solution branches~\cite{Mach2012,GoennerHavas2000} and the positive solutions singular at $\xi=0$. The linearized operator about $\theta_S$ carries a negative-energy bound state, so this threshold configuration is spectrally unstable; independently, the equilibrium of the distance-varying model, whose instability for odd $n$ was established in~\cite{AMR2018}, is unstable at $n=5$ both linearly and nonlinearly (Section~\ref{sec:stability}).
\subsection{Implications for Stellar Structure}
Astrophysically, the $n=5$ polytrope is an explicit model of a self-gravitating gas sphere with infinite extent and finite mass, but our results show that it is spectrally unstable. It sits on the boundary between qualitatively different regimes of stellar structure.

The finite-mass property follows directly from the Schuster solution: differentiating $\theta_S(\xi)=(1+\xi^2/3)^{-1/2}$ gives
\[
\lim_{\xi\to\infty}\xi^2\bigl|\theta_S'(\xi)\bigr|=\sqrt{3},
\]
so the enclosed mass is~\cite[Ch.~IV, p.~96]{Chan}
\begin{equation}
M=4\sqrt{3}\,\pi\alpha^3\rho_c,
\end{equation}
where $\alpha$ is the length scale and $\rho_c$ the central density. This finite total mass, despite the infinite spatial extent, is a hallmark of the $n=5$ threshold.

For an $n$-polytrope with finite surface radius $R_1$, the gravitational potential energy is~\cite[Ch.~IV]{Chan}
\begin{equation}
\Omega_{\rm grav}=-\frac{3}{5-n}\frac{GM^2}{R_1}.
\end{equation}
At $n=5$ this formula does not apply, since $R_1=\infty$; the gravitational energy is nevertheless finite. With $\rho=\rho_c\theta_S^5$, $r=\alpha\xi$ and the enclosed mass $M(\xi)=4\pi\alpha^3\rho_c\,\xi^2|\theta_S'(\xi)|$, a direct integration gives
\begin{equation}
\Omega_{\rm grav}=-G\int_0^\infty\frac{M(r)\,dM(r)}{r}=-\frac{\sqrt3\,\pi}{32}\,\frac{GM^2}{\alpha}.
\end{equation} Classical stellar stability theory also requires the effective adiabatic exponent to satisfy $\gamma_1>4/3$ for dynamical stability~\cite{Chan}. For the $n=5$ polytrope, $\gamma=1+1/n=6/5<4/3$, placing it in the dynamically unstable regime by this independent criterion. For the gravitational Euler--Poisson equations this is a theorem: Lin~\cite{Lin1997} proved linear instability of the steady states for $1<\gamma<4/3$, and Jang~\cite{Jang2008} proved their nonlinear instability for $\gamma=6/5$, i.e.\ for the $n=5$ polytrope. The present Lyapunov and spectral results concern the static Lane--Emden problem and are consistent with these dynamical results.

The exact solutions and the spectral characterization provide benchmarks for numerical simulations of polytropic models.
\subsection{Closing Perspective}
The $n=5$ case shows how scaling symmetry, Weierstrass elliptic integrability~\cite{Lawden,MancasRosu2018,MancasBBM2013}, spectral theory and nonlinear stability combine in one threshold phenomenon; the Lane--Emden equation has long linked nonlinear analysis and astrophysical modeling~\cite{Chan,Polytropes}.

The critical similarity exponent, the global $\wp$-function solution classification~\cite{Mach2012,GoennerHavas2000}, the spectral instability of the linearized operator about $\theta_S$ and the nonlinear instability of the distance-varying equilibrium all express the same threshold structure of self-gravitating equilibria.

\bibliography{references}

\end{document}